\documentclass[journal]{IEEEtran}

%
\ifCLASSINFOpdf
  \usepackage[pdftex]{graphicx}
\else
\fi

\hyphenation{op-tical net-works semi-conduc-tor}

\usepackage{times,amsmath,color,amssymb,epsfig,cite,enumerate,setspace}
\usepackage{subfigure,multirow,bm,stfloats}
\usepackage[para]{threeparttable}

\usepackage{tabularx}
\usepackage{array}
\usepackage{booktabs}
\usepackage{graphicx}
\usepackage{graphics}
\usepackage{pbox}
\usepackage{dsfont}
\usepackage{psfrag}

\newtheorem{Remark}{\it Remark}[section]

\newtheorem{Proposition}{\it Proposition}[section]
\newtheorem{Lemma}{\it Lemma}[section]

\begin{document}

\title{Distributed Opportunistic Scheduling for Energy Harvesting Based Wireless Networks: A Two-Stage Probing Approach}

\author{Hang~Li, \IEEEmembership{Student Member,~IEEE,}
        Chuan~Huang, \IEEEmembership{Member,~IEEE,}
        Ping~Zhang, \IEEEmembership{Member,~IEEE,}
        Shuguang~Cui,~\IEEEmembership{Fellow,~IEEE,}
        and~Junshan~Zhang,~\IEEEmembership{Fellow,~IEEE}
\thanks{Part of this work appeared in the Proceedings of the IEEE International Conference on Computer Communications (INFOCOM), Toronto, ON, Canada, April 27 - May 2, 2014.}
\thanks{H.~Li and S.~Cui are with the Department of Electrical and Computer Engineering, Texas A\&M University, College Station, Texas, 77843 USA (e-mail: david\_lihang@tamu.edu; cui@ece.tamu.edu). S. Cui is also a Distinguished Adjunct Professor at King Abdulaziz University in Saudi Arabia and a Visiting Professor at ShanghaiTech University, China.}
\thanks{C.~Huang is with the National Key Laboratory of Science and Technology on Communications, University of Electronic Science and Technology of China, Chengdu, Sichuan 610051 China (e-mail: huangch@uestc.edu.cn).}
\thanks{P.~Zhang is with the School of Information and Communication and also with the State Key Lab. of Networking and Switching Technology, Beijing University of Posts and Telecommunications, Beijing, 100876 China (e-mail: pzhang@bupt.edu.cn).}
\thanks{J.~Zhang is with the School of Electrical, Computer and Energy Engineering, Arizona State University, Tempe, Arizona, 85287 USA (e-mail: junshan.zhang@asu.edu).}}

\maketitle

\begin{abstract}
  This paper considers a heterogeneous \emph{ad hoc} network with multiple transmitter-receiver pairs, in which all transmitters are capable of harvesting renewable energy from the environment and compete for one shared channel by random access. In particular, we focus on two different scenarios: the constant energy harvesting (EH) rate model where the EH rate remains constant within the time of interest and the i.i.d. EH rate model where the EH rates are independent and identically distributed across different contention slots. To quantify the roles of both the energy state information (ESI) and the channel state information (CSI), a distributed opportunistic scheduling (DOS) framework with two-stage probing and save-then-transmit energy utilization is proposed. Then, the optimal throughput and the optimal scheduling strategy are obtained via one-dimension search, i.e., an iterative algorithm consisting of the following two steps in each iteration: First, assuming that the stored energy level at each transmitter is stationary with a given distribution, the expected throughput maximization problem is formulated as an optimal stopping problem, whose solution is proven to exist and then derived for both models; second, for a fixed stopping rule, the energy level at each transmitter is shown to be stationary and an efficient iterative algorithm is proposed to compute its steady-state distribution. Finally, we validate our analysis by numerical results and quantify the throughput gain compared with the best-effort delivery scheme.
\end{abstract}
\begin{IEEEkeywords}
Distributed opportunistic scheduling, energy harvesting, optimal stopping.
\end{IEEEkeywords}

\section{Introduction}
Conventional wireless communication devices are usually powered by batteries that can provide stable energy supplies. However, the battery lifetime limits the operation time of such devices. Recently, energy harvesting (EH) techniques have been proposed as a promising alternative to the conventional constant power supplies \cite{SS,BM}, which is capable of transferring the renewable energy from the environment into electrical energy. In this way, the node lifetime can be prolonged significantly. Compared with the conventional constant energy suppliers, transmitters powered by energy harvesters are restricted by a new class of EH constraints, i.e., the consumed energy up to any time is bounded by the harvested energy until this point \cite{CKH}. Therefore, to meet certain performance requirements, such as throughput, stability, delay, etc., these EH constraints should be carefully taken into account in the design of EH-based communication systems.

\subsection{Related Works and Motivations}
Communication systems powered by energy harvesters have been investigated in recent years. For the point-to-point wireless systems, the authors in \cite{CKH} \cite{OO} considered the throughput maximization problem over a finite horizon for both the cases that the harvested energy information is non-causally and causally known to the transmitter, where the optimal solutions were obtained by the proposed one-dimension search algorithm and dynamic programming (DP) techniques, respectively. In \cite{HC}, the authors extended the results to the classic three-node Gaussian relay channel with EH source and relay nodes, where the optimal power allocation algorithms were proposed. With a more practical circuit model by considering the half-duplex constraint of the battery, the authors in \cite{ruizhang} proposed a save-then-transmit protocol, which divides each transmission frame into two parts: the first one for harvesting energy and the other for data transmission. For wireless networks with EH constraints, the authors in \cite{MACprotocols} investigated the performance of some standard medium access control protocols, e.g., TDMA, framed-Aloha, and dynamic-framed-Aloha.

In related works on \emph{ad hoc} networking, opportunistic scheduling has been known as an effective method to utilize the wireless resource \cite{mandrews,pv,Xinliu}. In particular, a distributed opportunistic scheduling (DOS) scheme was introduced in \cite{dos,Dongzheng}, where only local channel state information (CSI) is available to each transmitter. By applying optimal stopping theory \cite{optstop}, it has been shown in \cite{dos,Dongzheng} that the optimal solution for the expected throughput maximization problem has a threshold-based structure. When channel estimation is imperfect, the authors in \cite{twolevelchannel} proposed a two-level channel probing framework that allows the accessing transmitter to perform one more round of channel estimation before data transmission to improve the quality of estimated CSI and possibly increase the system throughput. The optimal scheduling policy of the two-level probing framework was proven to be threshold-based as well by referring to the optimal stopping with two-level incomplete information \cite{osptwo}.

Different from the traditional energy supplies (e.g., non-rechargeable batteries, power grid) in the conventional networks \cite{mandrews,pv,Xinliu, dos,Dongzheng,twolevelchannel}, we consider the network powered by energy harvesters that could generate electric energy from different renewable energy sources. Among various types of renewable energy sources, we consider two typical energy harvesting rate models in this paper\footnote{A more general case is that the transmitter only has causal information about EH rates, which could be modeled as a Markov process. This model has been used in the point-to-point wireless system \cite{CKH,OO}.}:
\begin{enumerate}
  \item \emph{Constant energy harvesting rate model}: The EH rate (specifically, the amount of harvested energy per unit time) can be approximated as a constant within the entire time duration of interest. For example, the power variation coherence time of wind and solar EH systems is on the order of multiple seconds \cite{MB,energyreport}, while the duration of one communication block is about several milliseconds. Thus, over thousands of communication blocks, the EH rate keeps almost the same.
  \item \emph{Independent and identically distributed (i.i.d.) energy harvesting rate model}: Compared to the constant rate model, the EH rate for this case changes much faster, i.e., comparable to the duration of one communication block. For example, the energy from light, thermal, kinetic, or ambient-radiation sources, usually changes every several milliseconds. Accordingly, EH rates can be modeled as an i.i.d. \cite{OO,MACprotocols} random process.
\end{enumerate}

With the above two EH models, we investigate the DOS problem for a heterogeneous EH-based network, where the channel gains across different links and the EH rates across different transmitters are non-identical. The system works in a two-stage pattern as follows. In the first stage, all transmitters adopt random access and do channel probing (CP), during which the successful link can obtain the CSI via channel contentions, similar to those in \cite{dos,Dongzheng,twolevelchannel}. In the second stage, the successful transmitter at the first stage has the option to spend certain time to harvest more energy, i.e., executes energy probing (EP); and then, with the updated energy state information (ESI), it decides either to transmit in the rest of the transmission block, or to stop probing and give up the channel. With EP, \emph{since the total duration of the transmission block is fixed, although spending more time on harvesting energy could increase the energy level, it decreases the portion of the time for data transmission, which leads to a tradeoff to optimize}.

\subsection{Summary of Contributions}
We propose a DOS framework for an \emph{ad hoc} network powered by energy harvesters, which efficiently utilizes both the CSI and the ESI at each transmitter. In this framework, we adopt a ``save-then-transmit'' scheme, i.e., the transmitter keeps harvesting energy before it initiates the transmission that uses up all the available energy in the battery. Note that such a greedy power utilization scheme is suboptimal in general, while it is sensible when the number of transmitters is large.

The main contributions of this paper are summarized as follows:
\begin{enumerate}
  \item First, by assuming that the battery state at each transmitter is stationary with a certain distribution, the throughput maximization problem for the considered network is cast as a rate-of-return problem. We prove the existence of the optimal stopping rules for both EP and CP, and further obtain:
      \begin{itemize}
        \item For the constant EH model, the optimal stopping rule of EP is determined by maximizing the throughput over the transmission block before starting EP, and it is either zero or a finite value according to the given CSI and ESI. Then, based on the stopping rule of EP, the optimal stopping rule of CP is shown to be a pure threshold policy (the threshold does not change over time) and the transmission decision is made right after each round of CP.
        \item For the i.i.d. EH model, the optimal stopping rule for EP is shown to be dynamic and threshold based, which is obtained by solving a stopping problem over a finite-time horizon. The stopping rule of CP is also threshold based and obtained based on the decision of EP, i.e., either transmit or start a new CP. Unlike the constant case, the transmission decision under i.i.d. EH model is made during the process of EP.
      \end{itemize}
  \item Next, with a fixed stopping rule, we show the existence of the steady-state distribution of the battery state by constructing a ``super'' Markov chain with its states being jointly determined by all transmitters. Moreover, we propose an efficient iterative algorithm to compute the steady-state distribution, executed at each transmitter in parallel. Particularly, it is shown that with the constant EH model, if the network consists of $n$ transmitters and each one is with $m$ possible energy states, the computational complexity for one iteration of the proposed algorithm is on the order of $O\left(n^2m^2\right)$, which is more efficient (when $n$ and $m$ are large) than that of the super Markov chain case, whose complexity for one iteration is on the order of $O\left(2m^{2n}\right)$.
  \item Finally, by exploiting the structure of the rate-of-return problem, we show that the maximum throughput and the optimal scheduling strategy of the DOS framework could be obtained for both the two EH rate models, via one-dimension search by repeating the above two steps.
\end{enumerate}

The rest of this paper is organized as follows. Section \ref{systemmodel} introduces the system model. In Section \ref{formulation}, the throughput maximization problem is formulated and solved under the assumption that the stationary distribution of the battery at each transmitter is known. Then, with the obtained stopping rule, we prove in Section \ref{markov} the existence of the steady-state distribution for each transmitter, and propose an iterative algorithm to compute it. Section \ref{onedsearch} discusses the computation for the optimal throughput. In Section \ref{numerical}, numerical results are provided to validate our analysis and evaluate the throughput gain of our proposed scheduling scheme against the best-effort delivery. Finally, Section \ref{fin} concludes the paper.

\section{System Model}\label{systemmodel}
We consider a heterogeneous single-hop \emph{ad hoc} network, where all the $I$ transmitter-receiver pairs have independent but not necessarily identical statistical information of CSI and ESI. All pairs contend for one shared channel by random access. For each link, the transmitter is powered by a renewable energy source and utilizes a small rechargeable battery to temporally store the harvested energy. Note that the transmitter could keep harvesting energy until it initiates a data transmission. In addition, we do not consider the effect of inefficiency in energy storage and retrieval, nor the energy consumed other than data transmission, which can be approximately neglected by properly adjusting the energy model \cite{CKH,HC,OO,MACprotocols}. Denote the duration of one channel contention as $l>0$, and the length of one transmission block as $L$, which is an integer multiple of $l$.

\begin{figure}
  \centering
  \includegraphics[width=3.3in]{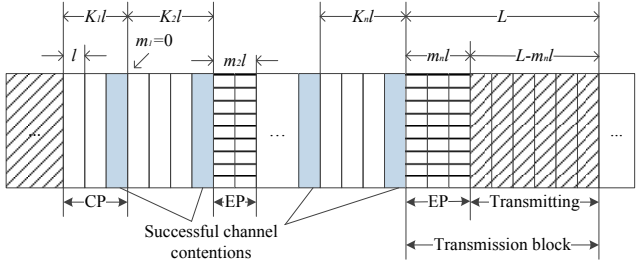}
  \caption{One realization for the DOS with two-stage probing.}
  \label{onerealization}
\end{figure}
As illustrated in Fig.~\ref{onerealization}, the DOS procedure of the whole network takes place in two stages: First, each transmitter probes the channel via random access and harvests energy at the same time; and then the successful transmitter may start the EP (to potentially increase the average transmission rate over the transmission block\footnote{If the successful transmitter experiences a bad channel condition and a low energy level, it may skip the transmission.}) before the data transmission process.

\subsubsection{Channel probing}
In the first stage, a successful channel contention is defined as follows: All transmitters first independently contend for the channel until there is only one contending in a particular time slot. Furthermore, one round of CP is defined as the process to achieve one successful channel contention. Denote the probability that transmitter $i$ contends for the channel as $q_{i}$, $1\leq i\leq I$, with $0\leq q_{i}\leq 1$. As such, the probability that the $i$-th transmitter successfully occupies the channel is given by $Q_i=q_{i}\prod_{j\neq i}(1-q_{j})$. Then, the probability to achieve one successful channel contention at each time slot is given by $Q=\sum_{i=1}^{I}Q_i$, and it is easy to check that $Q\leq 1$ \cite{thrpbc}. Accordingly, for the $n$-th round of CP, $n\geq1$, we use $K_n$ to denote the number of time slots needed to achieve a successful channel contention, which is a random variable and satisfies the geometric distribution with parameter $Q$ \cite{dos,Dongzheng,twolevelchannel}. In this way, the expected duration of one round of CP is given as $l/Q$. Denote the transmitted signal at transmitter $i$ as $x^i$, and the received signal $y^i$ is thus given by $ y^i=h^ix^i+z^i$, where $h^i$ is the complex channel gain and $z^i$ is the circularly symmetric complex Gaussian (CSCG) noise with zero mean and variance $\sigma^2$ at the receiver. Across different links, $\{h^i\}_{1\leq i\leq I}$ are independent with finite mean and variance, while not necessarily identically distributed. After one round of CP, the successful transmitter can perfectly estimate the corresponding channel gain via certain feedback mechanisms, and thus $h^i$ is assumed a known constant during the whole transmission block. After CP, the successful transmitter chooses one of the following actions based on its local CSI and ESI:

(a) releases the channel (if the CSI and ESI indicate that the transmission rate is lower than a threshold) and let all links re-contend; or

(b) directly transmits until the end of the transmission block; or

(c) holds the channel, starts EP.\\
Note that to complete one data transmission, it may take $n$ rounds of CPs as depicted in Fig.~\ref{onerealization}. It is worth noting that each transmitter keeps harvesting energy until it starts a transmission, and after each round of CP, only the successful transmitter makes a choice among three actions as listed above.

\subsubsection{Energy Probing}
When the successful transmitter decides not to take action (a) or (b) defined above, it starts the second stage EP, i.e., action (c), to obtain more energy. During this stage, the transmitter chooses to continue harvesting energy slot by slot, and then ends EP by action (a) or (b), i.e., either releasing the channel or transmitting over the rest of the transmission block. As it is depicted in Fig.~\ref{onerealization}, one transmission is fulfilled with $n$ rounds of CPs and $m_n$ extra slots of EP.

For transmitter $i$, let $B_{n,m}^i\in\mathbf{\Delta}$ denote the energy level of the battery after the $n$-th round of CP and $m$ additional time slots for EP, where $\mathbf{\Delta}=\left\{0, \delta, 2\delta,\cdots, B_{max}\delta\right\}$ is the set of all possible energy states, with $\delta$ being the minimum energy unit and $B_{max}\delta$ the capacity of the battery. We use $E^i_t$ to denote the EH rate of transmitter $i$ at time $t$. As noted in the previous section, we consider the following two types of scenarios:
\begin{enumerate}
  \item \emph{Constant EH rate model}: $\left\{E_{t}^i\right\}_{t\geq1}$ are constants for each $i$, i.e., $E_{t}^i=E^i\in\mathbf{\Delta}$ for all $t\geq1$, and $\{E^i\}$ can thus be learned and assumed non-causally known before transmissions.
  \item \emph{I.i.d EH rate model}: The EH rates among different transmitters are independent. For transmitter $i$, $\left\{E_{t}^i\right\}_{t\geq1}$ are i.i.d. across $t$, with finite mean and the probability mass function (PMF) $\Pr\{E_{t}^i=e\delta\}=F^i(e)$, where $e\in\left\{0,1,2,\cdots\right\}$.
\end{enumerate}
Under the save-then-transmit scheme, the energy level will keep non-decreasing and drop to zero after the transmission, which forms a Markov chain (as described in Section \ref{markov} later). Thus, the energy level $B_{n,m}^i$ can be written as
\begin{equation}\label{energylevel}
   B_{n,m}^i=\min\left\{B_{n,0}^i+l\sum_{k=0}^m E^i_k,B_{max}\delta\right\},
\end{equation}
where $n\geq 1$, $0\leq m\leq L/l$, and $\min\{x,y\}$ denotes the smaller value between two real numbers $x$ and $y$. Note that $B_{n,0}^i$ indicates the energy level after the successful contention round before taking any action. If $m=0$, i.e., transmitter $i$ does not do EP, we let $\sum_{k=0}^m E^i_k=E_0^i=0$.

\section{Transmission Scheduling}\label{formulation}
In this section, we target to derive the optimal scheduling policy that maximizes the average throughput for the considered network with the proposed two-stage access strategy, conditioned on the given battery state distribution. We point out that the results obtained in this section are based on the assumption that the energy level at transmitter $i$ is stationary with a given distribution $\Pi^i$, for $1\le i\le I$, which will be validated in Section \ref{markov}.

\subsection{Problem Formulation}\label{subformulation}
After the $n$-th round of CP and $m$ additional time slots, the CSI and the ESI at the successful transmitter are given as $\mathcal{F}_{n,m}^i=\left\{h_n^i,B_{n,m}^i\right\}$. Note that the channel gain $h_n^i$ is now indexed by $n$, which is determined at the end of the $n$-th round of CP and assumed fixed during the whole data transmission block. In particular, $\mathcal{F}_{n,0}^i=\left\{h_n^i,B_{n,0}^i\right\}$ denotes the initial information right after the $n$-th round of CP. For convenience, we omit the index $i$ for either the CSI or the ESI in the sequel, and retrieve it when necessary.

By adopting the save-then-transmit scheme at the transmitters to fully take advantage of each channel use, the transmission rate over $L/l$ time slots with state $\mathcal{F}_{n,m}$ is defined as
\begin{align}\label{averagerate}
    R_{n}(m)=\left(1-\frac{ml}{L}\right)\log\left(1+|h_n|^2\frac{B_{n,m}}{(L-ml)\sigma^2}\right).
\end{align}
When $ml=L$, we set $R_{n}(m)=0$ since there is no transmission in this case.
\begin{Remark}
  Some important properties of $R_{n}(m)$ are listed as follows.
  \begin{itemize}
    \item $\mathbb{E}\left[R_{n}(m)\right]<\infty$ and $\mathbb{E}\left[(R_{n}(m))^2\right]<\infty$, which results from the fact that $h_n$ has finite mean and variance and the energy level $B_{n,m}$ is also finite.
    \item $\{R_{n}(m)\}_{n\geq1}$ are approximately independent random variables over $n$. To see this, recall that the channel gains and the battery states are independent across different transmitters at a given time slot; moreover, the probability is small for a transmitter to occupy the channel in two consecutive contentions when the number of user pairs is large. For example, in an \emph{ad hoc} network with $K$ pairs where each pair fairly competes for the channel use with probability $1/K$, such a probability is $\frac{1}{K^2}(1-1/K)^{2(K-1)}$ \cite{thrpbc}, which is as small as 0.0625 even when $K=2$. Thus,  $\{\mathcal{F}_{n,m}\}_{n\ge1}$ are nearly independent over $n$, which implies that $\{R_{n}(m)\}_{n\geq1}$ are independent over $n$.
  \end{itemize}
\end{Remark}

Let $N$ be the stopping rule for CP, and $M_n$ be the stopping rule for EP associated with the $n$-th CP for $1\le n\le N$, which together tell the transmitter when to start the data transmission. Then, under these stopping rules, the transmission rate would be $R_{N}(M_N)$, and we let $T_{N}$ be the total time duration for completing one data transmission. Here, $T_N$ contains the duration of $N-1$ rounds of CP, which is given by $l\sum_{n=1}^{N-1}K_n$, and $l\sum_{n=1}^{N-1}M_n$ time slots in which the transmitter probes the energy but gives up the channel after EP. Also, after the $N$-th round of CP with the time $K_Nl$, the transmitter may use $M_N$ slots for the EP and transmit within the duration $L-M_Nl$ afterwards. Accordingly, we obtain
\begin{align}\label{tnm}
  T_{N}=l\sum_{n=1}^{N-1}M_n+l\sum_{n=1}^{N}K_n+L.
\end{align}
If such a process is executed $J$ times with $R_{N_j}(M_{N_j})L$ bits transmitted at each transmission, $1\leq j\leq J$, we obtain the average throughput $\lambda$ per transmission of the network:
\begin{equation}
\frac{L\sum_{j=1}^JR_{N_j}(M_{N_j})}{\sum_{j=1}^JT_{N_j}} \longrightarrow \lambda=\frac{L\mathbb{E}\left[R_{N}(M_N)\right]}
             {\mathbb{E}\left[T_{N}\right]}~\hbox{a.s.}\nonumber
\end{equation}
as $J\rightarrow\infty$ by the renewal theory \cite{PB}. Again, we point out that the energy level is stationary at the $N_j$-th round of CP for $j\geq1$, as we assumed.

Our target is to maximize $\lambda$ by adjusting the stopping rule $N$ and $\{M_n\}_{1\le n\le N}$. It is easy to see that maximizing $\lambda$ is in fact a ``rate-of-return'' stopping problem \cite{optstop,somepro} (for which the specific definition is given later). Instead of directly solving this problem, we examine the ``net reward'' of the considered network, which is given as
\begin{align}
&r_{N}(\lambda)=R_{N}(M_N)L-\lambda T_N\nonumber\\
=&(R_{N}(M_N)-\lambda)L-\lambda l\left[K_N+\sum_{n=1}^{N-1}\left(K_n+M_n\right)\right],\label{rnlambda}
\end{align}
for some $\lambda>0$. The term $(R_{N}(M_N)-\lambda)L$ can be interpreted as the reward of transmission, $\lambda l K_n$ as the cost of CP, and $\lambda lM_n$ as the cost of failed EP for $1\leq n\leq N-1$. We set $r_{-\infty}(\lambda)=-\infty$ since it is irrational that the system does not send any data forever. Then, we define the maximum value of the expected net reward with $\lambda>0$ as
\begin{equation}\label{netreward}
    S^*(\lambda)=\sup_{N\in \mathcal{N},\{M_n\}_{1\le n\le N}}\mathbb{E}\left[r_{N}(\lambda)\right],
\end{equation}
where $\sup(\cdot)$ denotes the least upper bound for a set of real numbers, and
\begin{align}
   \mathcal{N}\triangleq&\left\{N: N\geq1,~ \mathbb{E}\left[T_{N}\right]<\infty,\right.\nonumber\\
   &\left.~\hbox{for}~M_n\in[0,L/l]~\hbox{with}~1\leq n\leq N\right\}.
\end{align}
\begin{Remark}
  One important property of problem (\ref{netreward}) is time invariance. We observe that before the system starts the $N$-th round of CP, the accumulated cost $\lambda l\sum_{n=1}^{N-1}\left(K_n+M_n\right)$ over the past $N-1$ rounds of CP has already been finalized, with no need to be further considered in the remaining decision process. Moreover, $\{R_{n}(M_n)\}_{1\le n\le N}$ are independent over $n$ as we mentioned before; it follows that the expected optimal reward before the $N$-th round of CP is the same as that of any previous round of CP. In other words, the system can obtain the expected optimal reward $S^*(\lambda)$ whenever a new round of CP is about to start. Therefore, we conclude that problem (\ref{netreward}) is \emph{time invariant}.
\end{Remark}

Recall from Section \ref{systemmodel} that after each round of CP, the successful transmitter will choose one of three actions (i.e., transmitting, giving up the channel, or starting EP) according to the stopping rule of CP, which needs the expected reward of EP depending on the stopping rule of EP. Thus, we will first introduce the formulation and the optimal stopping rule for EP, and then for CP.

\subsubsection{Formulation for EP}
When the successful transmitter starts EP after the $n$-th round of CP, where $1\le n\le N$, it will end up with one of the two actions: transmitting or giving up the channel without transmission. Specifically, we define the expected optimal reward at the $k$-th slot of EP, $0\leq k\leq L/l$, as
\begin{align}\label{uk}
U_k(\mathcal{F}_{n,k})=\max_{k\leq M_n\leq L/l}\mathbb{E}&\left[\max\left\{(R_{n}(M_n)-\lambda)L,\right.\right.\nonumber\\
    &\left.\left.-\lambda lM_n+S^*(\lambda)\right\}\mid\mathcal{F}_{n,k}\right],
\end{align}
where $-\lambda lM_n+S^*(\lambda)$ is the expected value of giving up the channel after $M_n$ slots of EP. If $k=0$, $U_0(\mathcal{F}_{n,0})$ denotes the maximum of the expected net reward right after the $n$-th round of CP. In other words, we want to find the optimal stopping rule $M^*_n$ of EP which attains
\begin{align}\label{u0}
U_0(\mathcal{F}_{n,0})=\max_{0\leq M_n\leq L/l}\mathbb{E}&\left[\max\left\{(R_{n}(M_n)-\lambda)L,\right.\right.\nonumber\\
    &\left.\left.-\lambda lM_n+S^*(\lambda)\right\}\mid\mathcal{F}_{n,0}\right].
\end{align}
Note that $M^*_n$ exists since problem (\ref{u0}) is an optimal stopping problem over a finite time horizon \cite{optstop,goran}.

\subsubsection{Formulation for CP}
By choosing $\{M_n^*\}_{1\le n\le N}$, we define
\begin{equation}\label{optirate}
\lambda^*=\sup_{N\in \mathcal{N}}\frac{L\mathbb{E}\left[R_{N}(M_N^*)\right]}
             {\mathbb{E}\left[T_{N}\right]},~N^*=\arg\sup_{N\in \mathcal{N}}\frac{L\mathbb{E}\left[R_{N}(M_N^*)\right]}
             {\mathbb{E}\left[T_{N}\right]}.
\end{equation}
Note that if the optimal stopping rule $N^*\notin\mathcal{N}$, we would claim that $N^*$ does not exist. Thus, $\lambda^*$ is the optimal average throughput of the original rate-of-return problem.

The connection between the transformed problem (\ref{netreward}) and the original problem (\ref{optirate}) is introduced in the following lemma. It is worth noticing that with the optimal stopping rule $\{M_n^*\}_{1\le n\le N}$ for EP, problem (\ref{netreward}) boils down to a one-level stopping problem with stopping rule $N$.
\begin{Lemma}\label{lemforcontwo}

(i) If there exists $\lambda^*$ such that $S^*(\lambda^*)=0$, this $\lambda^*$ is the optimal throughput defined in (\ref{optirate}). Moreover, if $S^*(\lambda^*)=0$ is attained at $N^*(\lambda^*)$, the stopping rule $N^*$ defined in (\ref{optirate}) is the same as $N^*(\lambda^*)$, i.e., $N^*=N^*(\lambda^*)$.

(ii) Conversely, if (\ref{optirate}) is true, there is $S^*(\lambda^*)=0$, which is attained at $N^*$ given by (\ref{optirate}).
\end{Lemma}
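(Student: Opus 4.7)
The plan is to treat this as the standard equivalence between a rate-of-return problem and its penalized (net-reward) form, in the spirit of Chow-Robbins-Siegmund. Since the optimal energy-probing rules $\{M_n^*\}_{1\le n\le N}$ are already in hand from (\ref{u0}), I can substitute them into (\ref{rnlambda}) and work with the reduced one-level objective $\mathbb{E}[r_N(\lambda)]=L\mathbb{E}[R_N(M_N^*)]-\lambda\mathbb{E}[T_N]$ over $N\in\mathcal{N}$. The key structural fact I will keep using is that $\mathbb{E}[T_N]\ge L>0$ for every $N\in\mathcal{N}$, so dividing by $\mathbb{E}[T_N]$ is always legitimate and preserves inequalities.

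For part (i), assume $\lambda^*$ satisfies $S^*(\lambda^*)=0$, attained at $N^*(\lambda^*)$. Then for every admissible $N\in\mathcal{N}$,
\begin{equation}
L\mathbb{E}[R_N(M_N^*)]-\lambda^*\mathbb{E}[T_N]\le S^*(\lambda^*)=0,\nonumber
\end{equation}
so $L\mathbb{E}[R_N(M_N^*)]/\mathbb{E}[T_N]\le \lambda^*$. At $N=N^*(\lambda^*)$ the inequality becomes an equality, hence the supremum in (\ref{optirate}) equals $\lambda^*$ and is attained there, giving both $\lambda^*=\sup_N L\mathbb{E}[R_N(M_N^*)]/\mathbb{E}[T_N]$ and $N^*=N^*(\lambda^*)$.

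For part (ii), suppose (\ref{optirate}) holds with optimum $\lambda^*$ attained at $N^*$. For every $N\in\mathcal{N}$, $L\mathbb{E}[R_N(M_N^*)]/\mathbb{E}[T_N]\le \lambda^*$ rearranges to $\mathbb{E}[r_N(\lambda^*)]\le 0$, so $S^*(\lambda^*)\le 0$. At $N=N^*$ the ratio equals $\lambda^*$, hence $\mathbb{E}[r_{N^*}(\lambda^*)]=0$, showing the supremum in (\ref{netreward}) is exactly $0$ and is attained at $N^*$.

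The only delicate point — and what I consider the main obstacle — is making sure the interchange between the two formulations is valid uniformly over $N\in\mathcal{N}$. Specifically, I need $\mathbb{E}[T_N]<\infty$ (built into $\mathcal{N}$), $\mathbb{E}[T_N]\ge L>0$ (so the ratio is well defined and the sign of the inequality is preserved), and $\mathbb{E}[R_N(M_N^*)]<\infty$ (from the Remark that $\mathbb{E}[R_n(m)]<\infty$ together with Wald-type bounds using independence of $\{R_n(M_n^*)\}$ across $n$). Once these integrability conditions are checked, the two implications are purely algebraic manipulations of the relation $S^*(\lambda)=\sup_N(L\mathbb{E}[R_N(M_N^*)]-\lambda\mathbb{E}[T_N])$, and the lemma follows without further appeal to optimal-stopping machinery.
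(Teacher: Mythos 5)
Your proof is correct and is, in substance, the proof of the result the paper invokes by citation: the paper's entire justification for this lemma is a pointer to Theorem 1 in Chapter 6 of \cite{optstop}, and the proof of that theorem is exactly the two-directional algebraic argument you give (bound every ratio by $\lambda^*$ using $\mathbb{E}[r_N(\lambda^*)]\le S^*(\lambda^*)=0$, observe equality at the optimizer, and reverse the manipulation for the converse). Your insistence on checking $\mathbb{E}[T_N]<\infty$ (built into $\mathcal{N}$), $\mathbb{E}[T_N]\ge L>0$, and $\mathbb{E}[R_N(M_N^*)]<\infty$ is precisely what makes the division and rearrangement legitimate, and the reduction to a one-level problem by fixing $\{M_n^*\}$ matches how the paper sets up the lemma.

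One small gap: the first claim of part (i) hypothesizes only that some $\lambda^*$ satisfies $S^*(\lambda^*)=0$, not that the supremum in (\ref{netreward}) is attained, yet your argument immediately assumes a maximizing $N^*(\lambda^*)$ exists. Without attainment your first display still yields $\sup_{N\in\mathcal{N}}L\mathbb{E}[R_N(M_N^*)]/\mathbb{E}[T_N]\le\lambda^*$, but for the reverse inequality you should take a sequence $N_k\in\mathcal{N}$ with $\mathbb{E}[r_{N_k}(\lambda^*)]=\epsilon_k\uparrow 0$ and note that the corresponding ratios equal $\lambda^*+\epsilon_k/\mathbb{E}[T_{N_k}]\ge\lambda^*+\epsilon_k/L\rightarrow\lambda^*$, so the supremum in (\ref{optirate}) is exactly $\lambda^*$. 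With that one-line addition the lemma is fully proved; the ``Moreover'' clause and part (ii) are handled correctly as written.
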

This lemma directly follows Theorem 1 in Chapter 6 of \cite{optstop}.


The next proposition secures the existence of the optimal stopping rule for CP.
\begin{Proposition}\label{lemma1}
With the EP stopping rule $\{M_n^*\}_{0\le n\le N}$, the optimal stopping rule $N^*(\lambda)$ for problem (\ref{netreward}) exists. Moreover, for $N\geq1$, the following equation holds
\begin{equation}\label{optiequi}
S^*(\lambda)=U_0(\mathcal{F}_{N,0})-\lambda lK_N.
\end{equation}
\end{Proposition}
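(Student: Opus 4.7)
The proposition makes two claims---existence of the optimal CP stopping rule $N^*(\lambda)$ with the EP rules $\{M_n^*\}$ already pinned down, and a fixed-point identity relating $S^*(\lambda)$ to the per-round EP value $U_0$ and the CP cost---and I would handle them separately.

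For existence, the plan is to view (\ref{netreward}) as a single-level infinite-horizon optimal stopping problem in $N$ and invoke the classical sufficient conditions from \cite{optstop}: \emph{(i)}~$\mathbb{E}[\sup_N r_N(\lambda)^+]<\infty$, and \emph{(ii)}~$\limsup_N r_N(\lambda)=-\infty$ almost surely. Condition (i) follows from the uniform bound $R_N(M_N^*)\le \log\bigl(1+|h_N|^2 B_{max}\delta/(L\sigma^2)\bigr)$, whose first and second moments are finite by the first bullet of the preceding Remark. Condition (ii) follows because the accumulated cost $\lambda l\sum_{n=1}^{N-1}(K_n+M_n^*)$ grows at least linearly in $N$ (each $K_n\ge 1$), dwarfing the integrable single-round reward $(R_N(M_N^*)-\lambda)L$. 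Together these yield an optimal stopping rule.

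For the identity, I would exploit the time-invariance highlighted in the preceding Remark. Under the stationarity assumption on the battery distribution $\Pi^i$ and the i.i.d.\ geometric law of $\{K_n\}$, the problem looks distributionally the same at every round of CP that has not yet produced a transmission, so the maximum expected future reward viewed right \emph{before} the $N$-th CP is $S^*(\lambda)$ for every $N\ge 1$. Viewed right \emph{after} that CP, with state $\mathcal{F}_{N,0}$ observed and cost $\lambda lK_N$ sunk, the maximum expected reward over the remaining decisions---transmit immediately, enter EP, or give up and restart---is precisely $U_0(\mathcal{F}_{N,0})$ by the very definition (\ref{u0}), the ``give up'' branch already carrying $S^*(\lambda)$ as its continuation value. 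Equating the two views (in expectation) gives $S^*(\lambda)=\mathbb{E}[U_0(\mathcal{F}_{N,0})-\lambda lK_N]$, which by stationarity is independent of $N\ge 1$.

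The main obstacle I expect is the self-referential flavor of this identity: $S^*(\lambda)$ appears inside the very definition of $U_0$. My plan is to frame the argument as a Bellman operator $TV:=\mathbb{E}[\max\{U_0^V(\mathcal{F}_{1,0}),\,V\}]-\lambda l/Q$ on $\mathbb{R}$, where $U_0^V$ is (\ref{u0}) with $V$ substituted for $S^*(\lambda)$, show that $S^*(\lambda)$ is its unique fixed point in the relevant class, and then observe that $U_0(\mathcal{F}_{1,0})\ge S^*(\lambda)$ almost surely (because the inner EP problem always retains the $M_n=0$ give-up option with value $S^*(\lambda)$), which collapses the maximum inside $T$ and yields the stated identity immediately.
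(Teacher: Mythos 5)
Your proposal follows essentially the same route as the paper: existence is obtained by verifying Ferguson's two sufficient conditions ($\limsup_N r_N(\lambda)=-\infty$ a.s.\ and $\mathbb{E}[\sup_N r_N(\lambda)]<\infty$), and the identity (\ref{optiequi}) is read off the optimality equation, with $U_0(\mathcal{F}_{N,0})$ being by definition (\ref{u0}) the maximum over the three post-contention options (transmit, probe energy, or give up with continuation value $S^*(\lambda)$). One point to tighten: your condition \emph{(i)} does not follow from finiteness of the moments of a single-round reward alone---since $|h_N|^2$ is unbounded, $\sup_N R_N(M_N^*)$ over infinitely many rounds need not be integrable by itself; you must combine the finite second moment with the linearly growing cost $\lambda l N$ (each $K_n\ge 1$) and invoke the classical lemma $\mathbb{E}[\sup_N(X_N-cN)]<\infty$ for independent $X_N$ with uniformly bounded second moments, which is exactly what the paper does via Theorem~1, Chapter~4 of \cite{optstop} (handling heterogeneity across transmitters by a $Q_i/Q$-mixture rather than your uniform bound). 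Your fixed-point discussion of the self-referential definition of $S^*(\lambda)$ inside $U_0$ is a welcome addition the paper glosses over, and your expectation form of (\ref{optiequi}) is in fact the version the paper uses later to derive (\ref{lambdaequi}).
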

The proof is given in Appendix A.
\begin{Remark}
The equation (\ref{optiequi}) is obtained from the {\it optimality equation} of the CP. The calculation of the optimal throughput relies on this equation, which will be shown in Section \ref{onedsearch}.
\end{Remark}

Now, we are ready to derive the optimal stopping rules $N^*$ and $\{M_n^*\}$ that jointly maximize the expected value of $r_{N}(\lambda)$ for the two different EH models. As we mentioned above, the stopping rule $N$ for CP relies on the form of $M_N$ (the stopping rule for EP). We will find the optimal stopping rule $M_N^*$ before $N^*$. After obtaining the forms of the optimal stopping rules, the calculation for the optimal throughput will be discussed.

\subsection{Optimal Stopping Rule for Constant EH Model}\label{const}
For notation simplicity, we omit the index $N$ of CP when we derive the stopping rule $M$ in this subsection. Then, we will derive the stopping rule $N$ based on the results of EP.

When the EH rate is constant, the transmission rate $R(M)$ is deterministic for a given $\mathcal{F}_{0}$ over the transmission block. Then, we obtain a simplified version of $U_0(\mathcal{F}_{0})$ (\ref{u0}) as
\begin{align}
 U_0(\mathcal{F}_{0})=\max_{0\leq M\leq L/l}\max\left\{(R(M)-\lambda)L,-\lambda lM+S^*(\lambda)\right\}.\nonumber
\end{align}
The value of $U_0(\mathcal{F}_{0})$ can be obtained simply by comparing $-\lambda lM+S^*(\lambda)$ and $(R(M)-\lambda)L$, whose values can be computed individually. Clearly, the first one achieves its maximum $S^*(\lambda)$ at $M=0$. For the second term, only $R(M)$ is changing over $M$ with a given $\mathcal{F}_{0}$. Therefore, we settle down to the following auxiliary problem:
\begin{equation}\label{vn}
  V^*=\arg\max_{0\leq V\leq L/l}R(V).
\end{equation}
Then, we could use the optimal $V^*$ to find $M^*$ without difficulty. Note that when $Vl=L$, it follows that $R(V)=0$ according to our definition in Section \ref{systemmodel}, which implies that $V=L/l$ cannot be optimal, and thus we take $0\leq V\leq L/l-1$. We first consider a related continuous version of $R(V)$ by relaxing $Vl/L$ as $\rho$, $0\leq\rho<1$:
\begin{align}\label{optione}
  &\max_{0\leq\rho<1}R(\rho)=\max_{0\leq\rho<1}(1-\rho)\nonumber\\
&~~~\cdot\log\left(1+|h|^2\frac{\min\{B_{0}+\rho LE,B_{max}\delta\}}{(1-\rho)L\sigma^2}\right).
\end{align}
After solving (\ref{optione}), we will show how to obtain the optimal solution of problem (\ref{vn}).

First, we establish some properties for the objective function of problem (\ref{optione}).
\begin{Proposition}\label{lemma2}
For arbitrary $a,b\geq0$, we have that
\begin{enumerate}
      \item the function $y(x)=(1-x)\log\left(1+\frac{a+bx}{1-x}\right)$ is concave over $[0,1)$, and $\lim_{x\rightarrow1^-}y'(x)<0$;
      \item the function $g(x)=(1-x)\log\left(1+\frac{a}{1-x}\right)$ is concave and non-increasing over $[0,1)$.
\end{enumerate}
\end{Proposition}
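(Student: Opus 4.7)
The plan is to avoid tedious second-derivative calculations and instead exploit the perspective-of-a-concave-function structure shared by both $y$ and $g$. I would first rewrite $y(x)$ as $\varphi(1-x,\,a+bx)$, where $\varphi(s,t)=s\log(1+t/s)$ is the perspective of the concave map $f(t)=\log(1+t)$ in the variable $s>0$. A standard fact from convex analysis is that the perspective of a concave function is jointly concave on its natural domain $\{(s,t):s>0,\,t>-s\}$. Composing with the affine map $x\mapsto(1-x,\,a+bx)$, which keeps $s=1-x>0$ and $t=a+bx\ge 0$ for $x\in[0,1)$ and $a,b\ge 0$, preserves concavity under restriction to the affine section. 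Hence $y$ is concave on $[0,1)$, and the concavity claim in part (2) follows as the special case $b=0$.

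For the one-sided derivative limit, I would substitute $u=1-x$ and rewrite
\[
y(x)=u\log\!\bigl(a+b-(b-1)u\bigr)-u\log u,
\]
so that
\[
\frac{dy}{du}=\log\!\bigl(a+b-(b-1)u\bigr)-\log u-\frac{(b-1)u}{a+b-(b-1)u}-1.
\]
As $u\to 0^{+}$, the first term approaches the finite value $\log(a+b)$ (in the non-degenerate regime $a+b>0$, which is the intended setting; the trivial case $a=b=0$ gives $y\equiv 0$), the third term vanishes, and the $-\log u$ term diverges to $+\infty$ and dominates. Thus $dy/du\to+\infty$, and since $dy/dx=-dy/du$ by the chain rule, one obtains $\lim_{x\to 1^{-}}y'(x)=-\infty<0$.

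For the non-increasing property in part (2), I would differentiate $g$ directly to get
\[
g'(x)=-\log\!\Bigl(\tfrac{1-x+a}{1-x}\Bigr)+\frac{a}{1-x+a},
\]
then set $v=(1-x)/(1-x+a)\in(0,1]$, which collapses the expression to $g'(x)=\log v+1-v\le 0$ by the elementary inequality $\log v\le v-1$ (with equality iff $v=1$, i.e., $a=0$). The main obstacle is really the concavity proof: a brute-force computation of $y''(x)$ and a sign analysis after clearing denominators is algebraically painful, while the perspective argument sidesteps it entirely. The limit and the monotonicity then each reduce to a single substitution ($u=1-x$ and $v=(1-x)/(1-x+a)$ respectively), with no further estimates needed.
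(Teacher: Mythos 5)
Your proof is correct, and it takes a genuinely different route from the paper's on two of the three claims. For concavity, the paper just computes the second derivative, which collapses to the single negative fraction $y''(x)=-\frac{(a+b)^2}{(1-x)\left[a+1+(b-1)x\right]^2}\le 0$ (and similarly for $g$), so the brute-force route is less painful than you anticipated; your perspective argument, writing $y(x)=\varphi(1-x,a+bx)$ with $\varphi(s,t)=s\log(1+t/s)$ the perspective of the concave map $t\mapsto\log(1+t)$ and restricting to an affine line, is more conceptual, covers part (2) as the case $b=0$, and avoids differentiation entirely. For the boundary limit, the two arguments are essentially the same estimate: the paper reads off from $y'(x)=-\log\left(1+\frac{a+bx}{1-x}\right)+\frac{a+b}{1-x+a+bx}$ that the log term diverges to $-\infty$ while the other term stays bounded, which is your $-\log u$ domination after the substitution $u=1-x$. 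For the non-increasing claim, the paper exploits concavity to reduce to showing $g'(0)\le 0$ and then proves that by differentiating in $a$; your substitution $v=(1-x)/(1-x+a)$ turning $g'(x)$ into $\log v+1-v\le 0$ is cleaner, works pointwise, and does not lean on concavity at all. A small point in your favor: you explicitly flag the degenerate case $a=b=0$, where $y\equiv 0$ and the strict inequality $\lim_{x\to 1^-}y'(x)<0$ actually fails (the limit is $0$); the paper's proof silently assumes $a+b>0$ at that step.
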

\begin{proof}
Please see Appendix B.
\end{proof}

Since $\rho\in[0,1)$, when $\frac{B_{max}\delta-B_{0}}{LE}\geq1$, $R(\rho)$ is simply concave over $\rho$ on $[0,1)$ according to part 1) of Proposition \ref{lemma2}. When $\frac{B_{max}\delta-B_{0}}{LE}<1$, according to Proposition \ref{lemma2}, $R_{N}(\rho)$ is concave over $\left[0,\frac{B_{max}\delta-B_{0}}{LE}\right]$, and is non-increasing on $\left[\frac{B_{max}\delta-B_{0}}{LE},1\right)$. Thus, $R(\rho)$ cannot achieve its maximum on $\left(\frac{B_{max}\delta-B_{0}}{LE},1\right)$. Therefore, we treat this fact as a new constraint over $\rho$, and rewrite problem (\ref{optione}) as
\begin{align}\label{optig}
  &\max G(\rho)=\max(1-\rho)\log\left(1+|h|^2\frac{B_{0}+\rho LE}{(1-\rho)L\sigma^2}\right)\nonumber\\
  &~\hbox{s.t.} ~~B_{0}+\rho L E\leq B_{max}\delta,~ 0\leq\rho<1.
\end{align}

Next, we establish the following proposition to solve problem (\ref{optig}), where the obtained solution is optimal for problem (\ref{optione}) as well.
\begin{Proposition}\label{prop1}
The optimal solution $\rho^*$ for problem (\ref{optig}) is given by:
\begin{equation}
  \rho^*=\left\{
  \begin{array}{ll}
\min\left\{\rho_0,\frac{B_{max}\delta-B_{0}}{LE}\right\}, & \hbox{when $\frac{C+D}{1+C}\geq\log(1+C)$;} \\
0, & \hbox{otherwise,}
  \end{array}
\right. \nonumber
\end{equation}
where $C=\frac{|h|^2B_{0}}{L\sigma^2}$, $D=\frac{|h|^2E}{\sigma^2}$, and $\rho_0$ is the unique solution for the equation $\log\left(1+\frac{C+D\rho}{1-\rho}\right)=\frac{C+D}{1-\rho+C+D\rho}$ when $\frac{C+D}{1+C}\geq\log(1+C)$.
\end{Proposition}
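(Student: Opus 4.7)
The plan is to exploit the concavity of $G$ established in Proposition 3.2(1) (with $a=C$, $b=D$), which reduces the optimization to locating at most one stationary point and comparing it against the feasibility boundary $\rho \leq \frac{B_{max}\delta - B_0}{LE}$.

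First I would compute $G'(\rho)$ explicitly. Using $\frac{d}{d\rho}\!\left[\frac{C+D\rho}{1-\rho}\right]=\frac{C+D}{(1-\rho)^2}$, a direct differentiation yields
\[
G'(\rho) \;=\; -\log\!\left(1+\frac{C+D\rho}{1-\rho}\right) \;+\; \frac{C+D}{1-\rho+C+D\rho},
\]
so that the stationarity condition $G'(\rho)=0$ is exactly the equation defining $\rho_0$ in the statement. Since $G$ is concave on $[0,1)$, $G'$ is non-increasing, and Proposition 3.2(1) also gives $\lim_{\rho\to 1^-}G'(\rho)<0$, so $G'$ can have at most one zero in $[0,1)$ and is strictly negative near the right endpoint. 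Evaluating at $\rho=0$ gives
\[
G'(0) \;=\; \frac{C+D}{1+C} \;-\; \log(1+C),
\]
whose sign is precisely the branching criterion stated in the proposition.

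Next I would split into two cases. If $\frac{C+D}{1+C}\geq \log(1+C)$, then $G'(0)\geq 0 > \lim_{\rho\to 1^-}G'(\rho)$, so by the intermediate value theorem there exists a unique $\rho_0\in[0,1)$ with $G'(\rho_0)=0$; by concavity $\rho_0$ is the unconstrained maximizer of $G$, and $G$ is non-decreasing on $[0,\rho_0]$ and non-increasing on $[\rho_0,1)$. Intersecting with the feasibility set $\bigl[0,\frac{B_{max}\delta-B_0}{LE}\bigr]$ therefore gives the constrained optimizer $\rho^*=\min\bigl\{\rho_0,\, \frac{B_{max}\delta-B_0}{LE}\bigr\}$. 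If instead $\frac{C+D}{1+C}<\log(1+C)$, then $G'(0)<0$, and concavity forces $G'(\rho)<0$ throughout $[0,1)$; hence $G$ is strictly decreasing on the feasible interval, and the maximum is attained at $\rho^*=0$.

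No serious obstacle is anticipated: the argument is essentially concavity plus the sign of the derivative at zero. The only points that need care are (i) confirming that the derivative formula simplifies cleanly to the form matching the $\rho_0$-equation, and (ii) verifying that the unique stationary point, the constraint boundary, and $\rho=0$ exhaust the candidate optimizers, which is exactly what Proposition 3.2(1) supplies by granting both concavity and the negative right-endpoint slope.
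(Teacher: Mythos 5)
Your proof is correct and follows essentially the same route as the paper: concavity of $G$ from Proposition~3.2(1), the sign of $G'(0)$ as the branching criterion, and intersection of the unconstrained maximizer with the feasibility boundary $\rho\leq\frac{B_{max}\delta-B_0}{LE}$. The only difference is that you get uniqueness of $\rho_0$ directly from concavity (to be airtight, note $G''(\rho)=-\frac{(C+D)^2}{(1-\rho)\left[1+C+(D-1)\rho\right]^2}<0$ whenever $C+D>0$, so $G'$ is \emph{strictly} decreasing rather than merely non-increasing), whereas the paper proves uniqueness by a separate case analysis on $D\geq1$ versus $0\leq D<1$ comparing the derivatives of the two sides of the defining equation --- a computation that reduces to exactly the same expression for $-G''$.
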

\begin{proof}
Please see Appendix C.
\end{proof}

Based on the optimal solution $\rho^*$, the optimal $V^*$ for $R(V)$ in (\ref{vn}) can be obtained easily: We only need to compare $R(\left\lfloor\rho^*L/l\right\rfloor)$ against $R(\left\lceil\rho^*L/l\right\rceil)$, and $V^*$ should attain the larger value. Specifically, we have the following result.
\begin{Proposition}\label{prop2}
The optimal $V^*$ of the problem (\ref{vn}) is given by
\begin{equation}\label{stoppingM}
    V^*=\left\{
             \begin{array}{ll}
               \left\lfloor\rho^*L/l\right\rfloor,  &\hbox{if $R(\left\lfloor\rho^*L/l\right\rfloor)\geq R(\left\lceil\rho^*L/l\right\rceil)$;} \\
                \left\lceil\rho^*L/l\right\rceil, &\hbox{if $R(\left\lceil\rho^*L/l\right\rceil)> R(\left\lfloor\rho^*L/l\right\rfloor)$;}\\
              0, &\hbox{otherwise.}
             \end{array}
           \right.
\end{equation}
where $\rho^*$ is obtained by Proposition \ref{prop1}. Thus, the optimal stopping rule $M^*$ is given by
\begin{equation}
    M^*=\left\{
             \begin{array}{ll}
              0,  &\hbox{if $(R(V^*)-\lambda)L<S^*(\lambda)$;} \\
              V^*, &\hbox{otherwise.}
             \end{array}
           \right.
\end{equation}
The optimal reward $U_0(\mathcal{F}_{0})$ with constant EH rate model is
\begin{align}\label{u0withm}
 U_0(\mathcal{F}_{0})=\max\left\{(R(V^*)-\lambda)L,S^*(\lambda)\right\}.
\end{align}
\end{Proposition}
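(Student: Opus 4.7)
The plan is to decompose the proof into two independent optimizations hidden inside the expression for $U_0(\mathcal{F}_0)$: the integer maximization that yields $V^*$, and the separation of the outer ``transmit versus give up'' decision that yields $M^*$ and the formula (\ref{u0withm}).

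First I would establish the form of $V^*$ in (\ref{stoppingM}). Under the constant EH model, problem (\ref{vn}) is the discrete maximization of $R(V)$ over $V\in\{0,1,\ldots,L/l\}$, with $R(L/l)=0$ by definition. This is simply the restriction of the continuous function $R(\rho)$ in (\ref{optione}) to the lattice $\rho = Vl/L$. From the discussion preceding Proposition \ref{prop1} (using Proposition \ref{lemma2}), $R(\rho)$ is concave on $[0,\min\{1,(B_{max}\delta-B_{0})/(LE)\}]$ and non-increasing on $[(B_{max}\delta-B_{0})/(LE),1)$, hence unimodal on $[0,1)$ with unique continuous maximizer $\rho^*$ given by Proposition \ref{prop1}. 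By unimodality, the integer optimum of $R(V)$ must be attained at one of the two lattice neighbors of $\rho^*L/l$, namely $\lfloor\rho^*L/l\rfloor$ or $\lceil\rho^*L/l\rceil$, so it suffices to compare these two candidates and select the larger; this recovers the first two cases of (\ref{stoppingM}). The ``otherwise'' clause only covers the degenerate tie at $\rho^*=0$, where both neighbors collapse to $V^*=0$.

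Second I would handle the remaining maximization by interchanging the outer maximum over $M$ with the inner maximum over the two actions. Using the identity $\max_x\max\{a(x),b(x)\}=\max\{\max_x a(x),\max_x b(x)\}$ in the simplified expression
\begin{equation*}
U_0(\mathcal{F}_0)=\max_{0\le M\le L/l}\max\bigl\{(R(M)-\lambda)L,\,-\lambda lM+S^*(\lambda)\bigr\},
\end{equation*}
the inner maxima split cleanly: the first branch is maximized at $M=V^*$ with value $(R(V^*)-\lambda)L$ by the previous step, while the second branch is linear in $M$ with slope $-\lambda l<0$, hence maximized at $M=0$ with value $S^*(\lambda)$. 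Combining these yields (\ref{u0withm}). The optimal rule $M^*$ then simply tracks which branch attains the outer maximum: if transmission wins, set $M^*=V^*$; otherwise, give up before wasting any EP slot, i.e., $M^*=0$.

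The argument is a routine assembly of the concavity and optimality facts already in hand, so I do not expect a serious obstacle. The only care point is the boundary situation where $\lceil\rho^*L/l\rceil$ could equal $L/l$: since $R(L/l)=0$, the value comparison in (\ref{stoppingM}) automatically discards this point in favor of the interior neighbor, so no separate treatment is needed.
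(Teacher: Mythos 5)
Your argument is correct and matches the paper's intended justification: the paper gives no separate proof for this proposition, treating it as an immediate consequence of the unimodality of $R(\rho)$ established via Propositions \ref{lemma2} and \ref{prop1} (so the discrete optimum lies at one of the two lattice neighbors of $\rho^*L/l$) together with splitting the outer maximum over $M$ between the transmit branch (maximized at $V^*$) and the give-up branch (linear, decreasing, maximized at $M=0$). Your additional remarks on the degenerate ``otherwise'' case and the boundary point $\lceil\rho^*L/l\rceil=L/l$ are accurate and do not change the conclusion.
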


Next, the following proposition formally quantifies the optimal stopping rule $N^*$ and the equation to compute the optimal throughput $\lambda^*$.
\begin{Proposition}\label{stopingruleN}
The optimal stopping rule to solve problem (\ref{netreward}) is given by
\begin{equation}\label{cpruleconst}
   N^*=\min\left\{n\geq1: R_{n}(V^*)\geq\lambda^*\right\},
\end{equation}
with $V^*$ given in Proposition \ref{prop2}. Moreover, $\lambda^*$ satisfies the following equation
\begin{equation}\label{lambdaequi}
    \sum_{i=1}^I Q_i\mathbb{E}\left[\left(R^i\left(V^{*}\right)-\lambda^*\right)^+\right]=\frac{\lambda^*l}{L},
\end{equation}
where the function $(x)^+$ means $\max\{x,0\}$ for some real number $x$, and $Q_i$ is the probability of a successful channel contention at transmitter $i$, defined in Section \ref{systemmodel}. The index $n$ for $R^i\left(V^{*}\right)$ in (\ref{lambdaequi}) is removed since $\{R_n\left(V^{*}\right)\}_{n\geq1}$ are ergodic for $1\le i\le I$.
\end{Proposition}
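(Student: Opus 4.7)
The plan is to combine the optimality equation from Proposition \ref{lemma1}, the characterization $S^*(\lambda^*)=0$ from Lemma \ref{lemforcontwo}, and the closed-form reward $U_0(\mathcal{F}_{0})$ in (\ref{u0withm}), in order to recover both the threshold stopping rule (\ref{cpruleconst}) and the fixed-point equation (\ref{lambdaequi}) for $\lambda^*$.

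\emph{Deriving $N^*$.} By Lemma \ref{lemforcontwo}(ii), the optimal throughput $\lambda^*$ satisfies $S^*(\lambda^*)=0$. Substituting this into (\ref{u0withm}) gives, right after the $n$-th round of CP,
\begin{equation*}
U_0(\mathcal{F}_{n,0}) = \max\{(R_n(V^*)-\lambda^*)L,\,0\} = L\,(R_n(V^*)-\lambda^*)^+.
\end{equation*}
Hence the successful transmitter should proceed to transmit iff $R_n(V^*)\geq\lambda^*$; otherwise it releases the channel and the network starts a fresh round of CP, which by the time-invariance of problem (\ref{netreward}) is statistically identical to the current one. The first index $n$ at which transmission is actually triggered is therefore the minimizer in (\ref{cpruleconst}); $N^*\in\mathcal{N}$ because $\Pr\{R_n(V^*)\geq\lambda^*\}>0$ under nondegenerate CSI, so $N^*$ has a geometric tail and finite mean.

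\emph{Deriving (\ref{lambdaequi}).} Taking expectations of the optimality equation (\ref{optiequi}) at $\lambda=\lambda^*$ and using $S^*(\lambda^*)=0$ yields
\begin{equation*}
\lambda^*\, l\, \mathbb{E}[K_{N^*}] = \mathbb{E}\left[U_0(\mathcal{F}_{N^*,0})\right].
\end{equation*}
The number of contention slots $K_{N^*}$ until a successful contention is geometric with parameter $Q=\sum_{i} Q_i$, independent of both the identity of the winning transmitter and its local CSI/ESI, so $\mathbb{E}[K_{N^*}]=1/Q$. Conditioning the right-hand side on which transmitter wins (transmitter $i$ with conditional probability $Q_i/Q$, and local state $(h^i,B^i)$ drawn from its assumed stationary joint law) produces
\begin{equation*}
\mathbb{E}\left[U_0(\mathcal{F}_{N^*,0})\right] = \frac{L}{Q}\sum_{i=1}^{I} Q_i\, \mathbb{E}\left[(R^i(V^*)-\lambda^*)^+\right].
\end{equation*}
Equating the two expressions and cancelling the common factor $1/Q$ delivers (\ref{lambdaequi}). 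The index $n$ on $R^i$ can be dropped because, under the stationarity of $B^i$ and the i.i.d.\ channel gains across CP rounds, $\{R^i_n(V^*)\}_{n\geq1}$ is ergodic with a marginal distribution independent of $n$.

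The step I expect to require the most care is the decoupling of $K_{N^*}$ from $\mathcal{F}_{N^*,0}$ together with the identification $\mathbb{E}[U_0(\mathcal{F}_{N^*,0})]=\mathbb{E}[U_0(\mathcal{F}_{1,0})]$ for a generic fresh round. Both rest on the time-invariance already established in the excerpt: the contention count depends only on the random-access pattern preceding a success, while the winning transmitter's CSI/ESI are drawn from its own stationary law, so the two are independent; and because every fresh round is statistically identical, $N^*$ inherits the state distribution of the first round. Beyond this, the argument is a mechanical assembly of the earlier results.
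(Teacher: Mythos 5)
Your proof is correct and follows essentially the same route as the paper: the threshold rule comes from substituting $S^*(\lambda^*)=0$ into the closed-form reward (\ref{u0withm}), and the fixed-point equation comes from taking expectations of the optimality equation (\ref{optiequi}) at $\lambda=\lambda^*$. You simply make explicit the steps the paper leaves implicit, namely $\mathbb{E}[K_N]=1/Q$, the conditioning on the winning transmitter with weight $Q_i/Q$, and the cancellation of the $1/Q$ factors.
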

\begin{proof}
Following (\ref{u0withm}) in Proposition \ref{prop2}, the stopping rule $N^*$ has the form
\begin{equation}
   N^*=\min\left\{n\geq1: (R_{n}(V^*)-\lambda^*)L\geq S^*(\lambda^*)\right\}.\label{cpruleconst2}
\end{equation}
Thus, we can obtain $N^*$ by plugging $S^*(\lambda^*)=0$ into (\ref{cpruleconst2}), which results in (\ref{cpruleconst}). Finally, equation (\ref{lambdaequi}) can be obtained by plugging $S^*(\lambda^*)=0$ into (\ref{optiequi}) and taking the expectation on both sides.
\end{proof}
\begin{Remark}
Note that the stopping rule (\ref{cpruleconst2}) implies that each transmitter has the same threshold that is globally determined even when all transmitters have different statistics of the CSI and ESI. The intuition is similar to that in \cite{Dongzheng}: In order to guarantee the overall system performance, the transmitter with a bad channel condition and a low energy level should ``sacrifice'' its own reward, while the one with good conditions should transmit more data.
\end{Remark}

Directly following Propositions \ref{prop2} and \ref{stopingruleN}, the next proposition gives the DOS under the constant EH model.
\begin{Proposition}\label{theo1}
After the $n$-th round of CP, it is optimal for the successful transmitter to take one of the following two options:
\begin{enumerate}
  \item release the channel immediately if $R_{n}(V^*)<\lambda^*$ (which is equivalent to $M^*=0$), and let all transmitters perform the next round of CP;
  \item otherwise, transmit after $V^*$ slots for EH, where $V^*$ is given by Proposition \ref{prop2}.
\end{enumerate}
\end{Proposition}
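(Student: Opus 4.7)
The plan is to derive Proposition \ref{theo1} as a direct consequence of Propositions \ref{prop2} and \ref{stopingruleN} by specializing to $\lambda=\lambda^*$ and invoking Lemma \ref{lemforcontwo}, which guarantees $S^*(\lambda^*)=0$ at the optimal throughput. Since both the optimal EP rule $M^*$ and the optimal CP stopping rule $N^*$ have already been pinned down in the earlier propositions, the remaining work is to translate these stopping rules into the per-round action ``release the channel'' versus ``harvest then transmit''.

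First, I would invoke Lemma \ref{lemforcontwo} at the optimal rate-of-return $\lambda^*$ to obtain $S^*(\lambda^*)=0$. Substituting this into the EP rule of Proposition \ref{prop2}, the threshold condition $(R_n(V^*)-\lambda^*)L<S^*(\lambda^*)$ collapses to $R_n(V^*)<\lambda^*$, so that $M^*=0$ in the first branch and $M^*=V^*$ in the second. Second, I would match this against the CP stopping rule $N^*=\min\{n\geq 1:R_n(V^*)\geq\lambda^*\}$ from Proposition \ref{stopingruleN}: every round $n$ strictly before $N^*$ falls in the first branch and thus uses $M^*=0$, while round $N^*$ itself falls in the second branch, triggering $V^*$ slots of EP and transmission over the remaining $L-V^*l$ portion of the block.

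The main point to argue carefully \emph{rather than compute} is the operational meaning of ``$M^*=0$'' for a non-stopping round. In the save-then-transmit framework of Section \ref{systemmodel}, choosing $M^*=0$ at a round that fails the CP threshold means the successful transmitter neither harvests further nor initiates a data transmission; under the three-option rule stated earlier, the only remaining action is to yield the channel so that all links re-contend, i.e., option (a). This identification closes the loop and immediately delivers the two-option DOS policy asserted in Proposition \ref{theo1}.
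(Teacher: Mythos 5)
Your proposal is correct and follows essentially the same route as the paper, which presents Proposition \ref{theo1} as a direct corollary of Propositions \ref{prop2} and \ref{stopingruleN} after setting $S^*(\lambda^*)=0$ via Lemma \ref{lemforcontwo}. Your added remark on the operational meaning of $M^*=0$ at a non-stopping round (yielding the channel, option (a)) is a sensible explicit spelling-out of what the paper leaves implicit.
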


\subsection{Optimal Stopping Rule for i.i.d. EH Model}\label{rand}
Similarly as in the previous subsection, we first consider problem (\ref{u0}) to find the optimal stopping rule $M^*$, then the optimal stopping rule $N^*$ afterwards.

Under the i.i.d. EH model, $U_0(\mathcal{F}_{0})$ has the form in (\ref{u0}). As we mentioned in Section \ref{subformulation}, it is a finite-horizon stopping problem \cite{optstop,goran}, and the solution of problem (\ref{u0}) could be directly generalized in the next proposition.
\begin{Proposition}\label{propEPrand}
For $0\leq k\leq L/l$ and some $\lambda>0$, the \emph{optimality equation} for problem (\ref{u0}) is given by
\begin{align}\label{optiequirandEP}
U_k(\mathcal{F}_{k})=\max&\left\{(R(k)-\lambda)L,-\lambda kl+S^*(\lambda),\right.\nonumber\\
&\left.\mathbb{E}[U_{k+1}(\mathcal{F}_{k+1})\mid\mathcal{F}_{k}]\right\},
\end{align}
and the optimal stopping rule has the following form:
\begin{align}\label{optiruleEPrand}
M^*&=\min\left\{0\leq k\leq L/l:\right.\nonumber\\
&~\left.U_k(\mathcal{F}_{k})=\max\{(R(k)-\lambda)L,-\lambda kl+S^*(\lambda)\}\right\}.
\end{align}
\end{Proposition}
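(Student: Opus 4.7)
The plan is to derive the result via backward induction (dynamic programming) on the finite horizon $k=0,1,\ldots,L/l$, which is the standard route for finite-horizon optimal stopping problems as in Chapter 3 of \cite{optstop}. First I would verify the mild regularity needed for the existence theorem to apply: since the horizon is finite ($L/l$ is a finite integer), the energy level is capped at $B_{max}\delta$, and $h$ has finite mean and variance, the reward random variables $\{(R(k)-\lambda)L\}_{0\le k\le L/l}$ and the constants $\{-\lambda kl+S^*(\lambda)\}$ are uniformly bounded, so $\mathbb{E}[\sup_{0\le k\le L/l}|\cdot|]<\infty$, ensuring the standard existence-of-optimal-stopping-rule result for finite horizons is directly applicable.

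Next I would set up the induction. At the terminal slot $k=L/l$, no further EP is available, so the successful transmitter must pick one of the two immediate actions, yielding
\begin{equation*}
U_{L/l}(\mathcal{F}_{L/l})=\max\{(R(L/l)-\lambda)L,\,-\lambda L+S^*(\lambda)\}.
\end{equation*}
For $k<L/l$, by the Bellman principle of optimality the value at $\mathcal{F}_k$ is the maximum over three mutually exclusive options: (i) stop and transmit for reward $(R(k)-\lambda)L$; (ii) stop and release the channel after $k$ slots of failed EP, whose value is $-\lambda kl+S^*(\lambda)$ by the time invariance noted in the preceding remark (which allows us to identify the residual reward of a ``fresh'' system with $S^*(\lambda)$); or (iii) probe one more slot and observe $\mathcal{F}_{k+1}$, whose continuation value is $\mathbb{E}[U_{k+1}(\mathcal{F}_{k+1})\mid\mathcal{F}_k]$ by the induction hypothesis. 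Collapsing (i) and (ii) into a single ``stop'' action then yields exactly (\ref{optiequirandEP}).

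The stopping rule (\ref{optiruleEPrand}) follows immediately: by the principle of optimality the agent should continue as long as the continuation value $\mathbb{E}[U_{k+1}(\mathcal{F}_{k+1})\mid\mathcal{F}_k]$ strictly dominates the ``stop'' max, and stop the first time the maximum in (\ref{optiequirandEP}) is attained by one of the two stopping branches; this minimal-index characterization is precisely (\ref{optiruleEPrand}). The only place I would be careful is in the identification of the continuation-after-release value with $S^*(\lambda)$: this rests on the fact that, after releasing the channel, all transmitters re-contend from scratch, the i.i.d. EH rates reset statistically, and the ``net-reward'' formulation (\ref{netreward}) is time invariant, so the expected reward going forward is exactly $S^*(\lambda)$. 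Beyond this identification, the argument is a direct specialization of textbook finite-horizon optimal stopping theory, which is why the authors state the proposition as an immediate generalization.
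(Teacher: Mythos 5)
Your proposal is correct and matches the paper's approach: the paper gives no explicit proof, simply invoking standard finite-horizon optimal stopping theory, and your backward-induction argument (terminal condition at $k=L/l$, Bellman recursion collapsing the two stopping branches into one, and the minimal-index stopping rule) is precisely the textbook derivation being invoked. Your care in justifying the $-\lambda kl+S^*(\lambda)$ release value via the time-invariance remark is the right place to be careful and is consistent with how the paper uses that identification elsewhere (e.g., in the proof of Proposition 3.1).
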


The stopping rule $M^*$ given in (\ref{optiruleEPrand}) suggests that the EP would stop at $M^*$ by either transmitting or giving up the channel, which also indicates the final decision for the current round of CP. Thus, the optimal stopping rule $N^*$ could be obtained by reorganizing (\ref{optiruleEPrand}).
\begin{Proposition}\label{propCPrand}
The optimal stopping rule of CP under the i.i.d. EH model has the form as:
\begin{equation}\label{cprulerand}
   N^*=\min\left\{n\geq1: U_{M^*}(\mathcal{F}_{n,M^*})=(R_n(M^*)-\lambda^*)L\right\},
\end{equation}
where $M^*$ is the optimal stopping rule of EP given in Proposition \ref{propEPrand}. The optimal throughput $\lambda^*$ satisfies the following equation
\begin{align}\label{lambdaequirand}
\sum_{i=1}^I &Q_i\mathbb{E}\left[\mathbb{E}\left[\max\{R^i(M^{*})-\lambda^*,-\lambda^*M^{*}l/L\}\mid\mathcal{F}_{0}\right]^+\right]\nonumber\\
    =&\frac{\lambda^*l}{L}.
\end{align}
\end{Proposition}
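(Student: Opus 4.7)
The plan is to parallel the proof of Proposition \ref{stopingruleN} for the constant EH case, substituting the deterministic EP analysis there with the stochastic optimality equation (\ref{optiequirandEP}) and its stopping rule (\ref{optiruleEPrand}) from Proposition \ref{propEPrand}. The form of $N^*$ is read off directly from the CP optimality equation (\ref{optiequi}) of Proposition \ref{lemma1}, and the fixed-point equation for the optimal throughput $\lambda^*$ then follows by taking expectations on both sides of (\ref{optiequi}) at $\lambda=\lambda^*$.

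For the stopping rule (\ref{cprulerand}): equation (\ref{optiequi}) reads sample-path-wise as $S^*(\lambda)=U_0(\mathcal{F}_{N,0})-\lambda lK_N$, and Lemma \ref{lemforcontwo} gives $S^*(\lambda^*)=0$ at the optimum. So the CP process should stop at the first round $n$ for which $U_0(\mathcal{F}_{n,0})$ is attained by a transmit action rather than by a give-up action. By the inner structure of (\ref{optiequirandEP}) and the first-entry form of $M^*$ in (\ref{optiruleEPrand}), this happens exactly when $U_{M^*}(\mathcal{F}_{n,M^*})=(R_n(M^*)-\lambda^*)L$, which yields (\ref{cprulerand}) by taking the smallest such $n$.

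For the throughput equation (\ref{lambdaequirand}): taking expectations in the identity $U_0(\mathcal{F}_{N^*,0})=\lambda^* lK_{N^*}$ and using $\mathbb{E}[K_{N^*}]=1/Q$ with $Q=\sum_i Q_i$ (geometric CP length) gives $\mathbb{E}[U_0(\mathcal{F}_{N^*,0})]=\lambda^* l/Q$. Time invariance of problem (\ref{netreward}) ensures that the conditional distribution of $\mathcal{F}_{n,0}$ given a successful contention in round $n$ is the same for every $n$, and combining this with the conditional probability $Q_i/Q$ that transmitter $i$ is the winner yields
\begin{equation*}
\mathbb{E}[U_0(\mathcal{F}_{N^*,0})]=\sum_{i=1}^{I}\frac{Q_i}{Q}\,\mathbb{E}[U_0^i(\mathcal{F}_0)].
\end{equation*}
Plugging in $U_0^i(\mathcal{F}_0)=L\,\mathbb{E}[\max\{R^i(M^*)-\lambda^*,-\lambda^* M^* l/L\}\mid\mathcal{F}_0]$ (from (\ref{u0}) evaluated at $M=M^*$ with $S^*(\lambda^*)=0$), dividing by $L$, and observing that $U_0^i(\mathcal{F}_0)\geq 0$ (the trivial choice $M=0$ together with the give-up branch already secures value $0$, so the positive-part operator in (\ref{lambdaequirand}) is vacuous but valid) then produces (\ref{lambdaequirand}).

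The main technical obstacle is the decomposition of $\mathbb{E}[U_0(\mathcal{F}_{N^*,0})]$ across transmitters, because $N^*$ is itself a random stopping time. The argument leverages (i) time invariance, so that the expectations do not depend on which particular round $n$ is reached, (ii) the round-by-round conditional win probability $Q_i/Q$ inherited from the geometric channel-contention model of Section \ref{systemmodel}, and (iii) the near-independence of CP rounds noted in the Remark following (\ref{averagerate}); together these allow the decomposition to go through in the same manner as in the constant EH proof.
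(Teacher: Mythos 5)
Your proposal is correct and follows the same route the paper intends: the paper omits this proof entirely, stating only that it is analogous to the constant EH case (Proposition \ref{stopingruleN}), whose proof reads off $N^*$ from the form of $U_0$ together with $S^*(\lambda^*)=0$ and obtains the throughput equation by substituting $S^*(\lambda^*)=0$ into (\ref{optiequi}) and taking expectations --- exactly your two steps. The extra details you supply (the $Q_i/Q$ decomposition across winners, $\mathbb{E}[K_N]=1/Q$, and the observation that the outer positive part is vacuous since $U_0\geq S^*(\lambda^*)=0$) correctly fill in what the paper leaves implicit.
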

The proof is analogous to the constant EH rate case, which is omitted here.

The next proposition, which directly follows Propositions \ref{propEPrand} and \ref{propCPrand}, concludes the overall DOS under i.i.d. EH model.
\begin{Proposition}\label{theo2}
After the $n$-th round of CP, it is optimal for the successful transmitter to take one of the following two options:
\begin{enumerate}
  \item if $\max\left\{(R_n(0)-\lambda^*)L,\mathbb{E}[U_{1}(\mathcal{F}_{n,1})\mid\mathcal{F}_{n,0}]\right\}<0$, release the channel immediately and let all transmitters start the next round of CP.
  \item otherwise, start EP following the optimal stopping rule $M_n^*$ given in Proposition \ref{propEPrand}.
\end{enumerate}
\end{Proposition}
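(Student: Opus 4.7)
The plan is to combine the EP optimality equation from Proposition \ref{propEPrand} at the initial index $k=0$ with the CP stopping rule from Proposition \ref{propCPrand}, all evaluated at the throughput $\lambda^*$ for which Lemma \ref{lemforcontwo} guarantees $S^*(\lambda^*)=0$. Under this substitution the ``give up'' branch $-\lambda^* M_n l + S^*(\lambda^*)$ of (\ref{optiequirandEP}) collapses to exactly $0$ at $M_n=0$, so that the optimality equation specializes to
\begin{align*}
U_0(\mathcal{F}_{n,0}) = \max\bigl\{(R_n(0)-\lambda^*)L,\; 0,\; \mathbb{E}[U_1(\mathcal{F}_{n,1})\mid\mathcal{F}_{n,0}]\bigr\}.
\end{align*}
This display already exposes the three admissible actions at the start of the $n$-th EP stage: transmit immediately, release the channel without probing, or keep probing.

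Next I would perform a case split on which of the three arguments attains the maximum. If $\max\{(R_n(0)-\lambda^*)L,\mathbb{E}[U_1(\mathcal{F}_{n,1})\mid\mathcal{F}_{n,0}]\}<0$, then the maximum in the display above is realized by the $0$ term, which corresponds to the release branch at $k=0$; by the definition of $M^*$ in (\ref{optiruleEPrand}) this forces $M_n^*=0$ with terminal action ``give up'', and the CP stopping criterion of Proposition \ref{propCPrand}, namely $U_{M_n^*}(\mathcal{F}_{n,M_n^*})=(R_n(M_n^*)-\lambda^*)L$, fails. Hence the system proceeds to CP round $n+1$, which is precisely option~(1). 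If, on the other hand, either $(R_n(0)-\lambda^*)L\ge 0$ or $\mathbb{E}[U_1(\mathcal{F}_{n,1})\mid\mathcal{F}_{n,0}]\ge 0$, the optimal action prescribed by Proposition \ref{propEPrand} is to enter the EP routine under $M_n^*$: either $M_n^*=0$ with an instantaneous transmission, or $M_n^*>0$ with further probing whose terminal transmit-or-release verdict is governed by (\ref{optiruleEPrand}). Both are subsumed by the statement ``start EP following $M_n^*$'', which is option~(2). The two cases are mutually exclusive and together exhaust every realization of $\mathcal{F}_{n,0}$, so the claimed scheduling rule follows.

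The main obstacle, which is essentially bookkeeping, is to verify that the ``give up at $k=0$'' branch inside Proposition \ref{propEPrand} really implements the ``release the channel and re-contend'' semantics of option~(1) — that is, a stopping time $M_n^*=0$ assigned to the release branch (rather than to the transmission branch) must trigger a fresh round of CP rather than any transmission. Once this correspondence is made explicit by tracing through (\ref{cprulerand}), the proposition is a direct consequence of the two preceding propositions.
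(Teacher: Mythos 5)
Your argument is correct and follows exactly the route the paper intends: the paper offers no written proof beyond stating that the proposition ``directly follows Propositions \ref{propEPrand} and \ref{propCPrand},'' and your derivation---specializing the optimality equation (\ref{optiequirandEP}) at $k=0$ with $S^*(\lambda^*)=0$ to a three-way maximum and case-splitting on which branch attains it---is precisely the bookkeeping that assertion leaves implicit. No gaps.
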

\begin{Remark}
Propositions \ref{theo1} and \ref{theo2} summarize the DOS under the constant and i.i.d. EH models, respectively. We observe that under the constant EH model, the EP could be ``forecasted'' by finding the optimal $V^*$; then the decision of transmission would be made before starting EP. On the contrary, when the EH rates are i.i.d., such decision can only be made step by step during the EP.
\end{Remark}

\section{Battery Dynamics}\label{markov}
In this section, we validate the assumption made in Section \ref{formulation} that the energy level at each transmitter is stationary with some distribution. Firstly, we show that under the constant EH model, the energy level stored at each transmitter forms a Markov chain over time, while the state transition probabilities for different transmitters are coupled together. However, we propose an iterative algorithm to compute the corresponding steady-state distribution, which is shown converging to the global optimal point. Then, we extend our analysis to the case with i.i.d. EH rate model.

\subsection{Battery with Constant EH Model}
Note that after CP, if the successful transmitter releases the channel immediately, then the next round of CP starts, and the battery continues to be charged. If the transmitter starts the transmission, its energy level will become zero at the end of the transmission block according to Section \ref{systemmodel}. During this time, all other transmitters will keep harvesting energy within this period. Thus, the energy level transition over the transmission block can be determined. To simplify our analysis, the transmission block is treated as one time slot with length $L$ for the purpose of counting battery state transitions. In addition, we assume that the battery works in half-duplex mode, i.e., it cannot be charged when the transmitter transmits data.

For transmitter $i$ with EH rate $E^i$, $1\leq i\leq I$, the set of its energy states is given by $B_t^i\in\mathbf{\Delta}_i=\left\{0,E^il, 2E^il\cdots, \left\lfloor\frac{B_{max}\delta}{E^il}\right\rfloor E^il,B_{max}\delta\right\}$, where $t\ge1$ is the slot index. The state transition is depicted in Fig. \ref{markovchain}. In addition, we denote the distribution of the energy level for transmitter $i$ at time $t$ as $\Pi_t^i=\left[\pi^i_{t,0} \cdots\pi^i_{t,B_{max}}\right]$.
\begin{figure}
 \centering
  \includegraphics[width=3.2in]{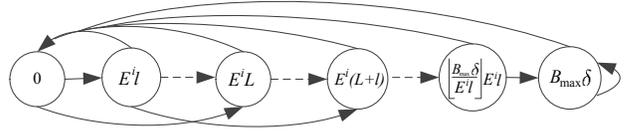}
  \caption{The state transition of the energy level at transmitter $i$ under the constant EH rate model.}
  \label{markovchain}
\end{figure}

Next, we consider the state transition probability. Suppose that transmitter $i$ is at energy level $u_i\in\mathbf{\Delta}_i$, there are three events that may happen at time slot $t$:

(i) It occupies the channel and transmits. According to Section \ref{systemmodel}, transmitter $i$ consumes all the energy for the transmission, and transfers to the energy level 0 after the transmission. Thus, the transition probability is given by
\begin{equation}\label{pu0}
  p^i_{u_i,0}=Q_ip_{tr}^i(u_i),
\end{equation}
where $Q_i$ is the probability that the $i$-th transmitter occupies the channel, and $p_{tr}^i(u_i)$ is the probability that it successfully transmits with the energy level $u_i$. Furthermore, according to (\ref{cpruleconst}), $p_{tr}^i(u_i)$ can be computed as
\begin{align}\label{ptrans}
  &p_{tr}^i(u_i)=\mathbb{P}\left\{R^i(V^*)\geq\lambda^*\right\}\nonumber\\
=&\mathbb{P}\left\{\log\left(1+|h^i|^2\frac{u_i+V^*l  E^i}{(L/l-V^*)l\sigma^2_i}\right)\geq\frac{\lambda^*}{1-\frac{V^*l}{L}}\right\},
\end{align}
where $V^{*}$ is defined by (\ref{stoppingM}) in Proposition \ref{prop2}. Note that in (\ref{ptrans}), $|h^i|^2$ is the only random variable and its distribution is known.

(ii) Other transmitters occupy the channel and transmit. If anyone among the other $I-1$ transmitters sends data, transmitter $i$ will harvest $E^iL$ units of energy during this period, and then attain level $v_i=\min\left\{u+E^iL,B_{max}\delta\right\}$. Suppose that the $j$-th transmitter transmits. Similar to the first case, the probability of transmission performed by the $j$-th transmitter is given by $Q_j\sum_{b=0}^{B_{max}}\pi^j_{t,b} p_{tr}^j(bE^jl)$, where $bE^jl\in\mathbf{\Delta}_j$ and thus $b\in\left\{0,1, 2,\cdots, \left\lfloor\frac{B_{max}\delta}{E^jl}\right\rfloor,B_{max}\right\}$. Since there are in total $I-1$ transmitters, the transition probability for the transmitter $i$ from level $u_i$ to $v_i$ is given by
\begin{equation}\label{puv1}
p^i_{u_i,v_i}=\sum_{j\neq i}Q_j\sum_{b=0}^{B_{max}}\pi^j_{t,b} p_{tr}^j(bE^jl).
\end{equation}

(iii) No transmission happens. In this case, transmitter $i$ just harvests $E^il$ units of the energy and goes into state $w_i=\min\left\{u_i+E^il,B_{max}\delta\right\}$. The probability of this case happening can be directly obtained as
\begin{equation}\label{puv2}
p^i_{u_i,w_i}=1-p^i_{u_i,0}-p^i_{u_i,v_i}.
\end{equation}
Note that when $\widetilde{u}_i=v_i=w_i$, the transition probability is just given by
\begin{align}
 p^i_{u_i,\widetilde{u}_i}&=p^i_{u_i,v_i}+p^i_{u_i,w_i}
 =p^i_{u,v_i}+1-p^i_{u_i,0}-p^i_{u_i,v_i}\nonumber\\
 &=1-p^i_{u_i,0}.\label{puv}
\end{align}

In this way, we can compute all $\{p^i_{u_i,\widetilde{u}_i}\}$ for $1\leq i\leq I$, where $u_i\in\mathbf{\Delta}_i$ and $\widetilde{u}_i\in\{0,v_i,w_i,B_{max}\delta\}$. The transition probability matrix is nothing but $\mathbf{P}^i_t=\{p^i_{u_i,\widetilde{u}_i}\}$ with dimension $\left(\left\lceil\frac{B_{max}\delta}{E^il}\right\rceil+1\right)\times\left(\left\lceil\frac{B_{max}\delta}{E^il}\right\rceil+1\right)$. Obviously, $\mathbf{P}^i_t$ is a stochastic matrix, i.e, a square matrix in which all elements are nonnegative and the row sum is 1. However, $\mathbf{P}^i_t$ depends on $t$ since $p^i_{u_i,v_i}$ depends on the state distribution $\Pi^j_t$ for all $j\neq i$. Therefore, $\left\{B_t^i\right\}_{t\geq0}$ is a non-homogeneous Markov chain, whose state evolution is given by
\begin{equation}\label{energyevo}
   \Pi^i_{t+1}=\Pi^i_t\mathbf{P}^i_t,~t\geq0.
\end{equation}

We propose Algorithm \ref{markoviter}, which is summarized in Table I,  to compute the steady-state distribution for all transmitters. Here, the infinity norm is applied, which is defined as $\parallel\mathbf{a}\parallel_{\infty}=\max_{1\leq i\leq n}|a_i|$ for $\mathbf{a}=[a_1~\cdots a_n]$.
\begin{table}[ht]
\begin{center}
\caption{Algorithm \ref{markoviter}: Compute the steady-state distribution for all transmitters.}
\hrule
\vspace{0.3cm}
\begin{itemize}
\item Initialize $\Pi^i_0$ for $1\leq i\leq I$, $\varepsilon$, and compute $p^i_{u_i,0}$ by (\ref{pu0}) for all $u_i\in\mathbf{\Delta}_i$ and $1\leq i\leq I$;
\item Set $t=0$, compute  $\mathbf{P}^i_0$ by (\ref{puv1})--(\ref{puv}) for all $1\leq i\leq I$, and compute $\Pi^i_{1}$ by (\ref{energyevo}) for all $1\leq i\leq I$. Then:
\begin{itemize}
  \item While $\max_{1\leq i\leq I}\parallel\Pi^i_{t+1}-\Pi^i_{t}\parallel_{\infty}>\varepsilon$, repeat:
  \begin{enumerate}
   \item $t=t+1$;
    \item Update $\mathbf{P}^i_t$ by (\ref{puv1})--(\ref{puv}) for all $1\leq i\leq I$;
    \item Compute $\Pi^i_{t+1}$ by (\ref{energyevo}) for all $1\leq i\leq I$;
 \end{enumerate}
  \item end.
\end{itemize}
\item Algorithm ends.
\end{itemize}
\vspace{0.2cm} \hrule \label{markoviter} \end{center}
\end{table}

\begin{Proposition}\label{algorithmmarkov}
For any given initial state distribution $\Pi^i_0$, $\Pi^i_t=\left[\pi^i_{t,0} \cdots\pi^i_{t,B_{max}}\right]$ that is  generated by Algorithm \ref{markoviter}, converges to a unique steady-state distribution $\Pi^i$ for all $1\leq i\leq I$.
\end{Proposition}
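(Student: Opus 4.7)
The plan is to recast Algorithm~\ref{markoviter} as a single fixed-point iteration $\mathbf{\Pi}_{t+1} = F(\mathbf{\Pi}_t)$ on the compact convex set $\Omega = \prod_{i=1}^I \mathcal{P}(\mathbf{\Delta}_i)$, and then combine a super Markov chain argument for existence with a contraction-type estimate for uniqueness and convergence. First I would observe that $\mathbf{P}^i_t$ given by (\ref{pu0})--(\ref{puv}) depends on the other transmitters' distributions only through the scalar intensities $\alpha_j = Q_j \sum_b \pi^j_{t,b}\, p^j_{tr}(b E^j l)$ for $j\ne i$, and is affine in $\bm{\alpha}_{-i}$; hence $F$ is continuous on a compact convex set, so Brouwer's theorem already supplies at least one fixed point. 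In parallel, I would treat the joint process $(B^1_t,\ldots,B^I_t)$ as a homogeneous Markov chain on the finite product space $\prod_i \mathbf{\Delta}_i$: verifying irreducibility (from any state, a long enough run of the ``no transmission'' event saturates every battery, after which transmitters discharge in sequence) and aperiodicity (the all-saturated state has a positive self-loop via the ``no transmission'' event) guarantees a unique joint stationary distribution whose marginals anchor the target $\mathbf{\Pi}^*$.

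Next, uniqueness and convergence would be obtained through a contraction estimate in the scalar intensities $\bm{\alpha}=(\alpha_1,\ldots,\alpha_I)$. For any fixed $\bm{\alpha}_{-i}$, the matrix $\mathbf{P}^i$ is an irreducible aperiodic stochastic matrix on $\mathbf{\Delta}_i$---the reset entry $p^i_{u_i,0}=Q_i p^i_{tr}(u_i)$ is positive whenever $p^i_{tr}(u_i)>0$, and the harvesting transitions connect the remaining states---so the power iteration $\Pi^i\mapsto \Pi^i\mathbf{P}^i$ contracts in total variation at the geometric rate given by Dobrushin's ergodic coefficient $\delta(\mathbf{P}^i)<1$. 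Combined with the Lipschitz dependence of $\mathbf{P}^i$ on $\bm{\alpha}_{-i}$, this would yield a contraction of $F$ in a suitable weighted max-norm on $\Omega$, from which the Banach fixed-point theorem delivers both uniqueness of $\mathbf{\Pi}^*$ and geometric convergence of $\{\mathbf{\Pi}_t\}$ in a single stroke.

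The hard part will be establishing the Dobrushin contraction uniformly over $\Omega$: if every $\alpha_j$ vanishes, each $\mathbf{P}^i$ degenerates to a deterministic upward shift on $\mathbf{\Delta}_i$ whose contraction coefficient equals $1$, so the argument breaks down on the boundary of $\Omega$. I would rule this boundary out structurally by showing that every battery saturates at $B_{max}\delta$ with positive probability at any fixed point; since $p^i_{tr}(B_{max}\delta)>0$ under the assumed distribution of $h^i$, the fixed-point equation $\alpha^*_i=Q_i\sum_b \pi^{i*}_b p^i_{tr}(b E^i l)$ forces $\alpha^*_i>0$ for every $i$. A short bookkeeping argument then confines the iterates $\bm{\alpha}_t$ to a compact subset of the interior after finitely many steps, where uniform contraction applies and the proof concludes; the super Markov chain of the first paragraph serves as an independent cross-check that the computed $\mathbf{\Pi}^*$ coincides with the marginals of the unique joint stationary distribution.
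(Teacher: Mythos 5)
Your super-chain construction matches the first half of the paper's proof, which likewise builds the time-homogeneous joint chain on $\prod_i\mathbf{\Delta}_i$ via (\ref{puvi})--(\ref{puw}), invokes the unichain property to obtain a unique stationary distribution, and then deduces convergence of Algorithm \ref{markoviter} by showing that one sweep of the algorithm \emph{is} one step $\mathbf{\Pi}_{t+1}=\mathbf{\Pi}_t\mathbf{P}$ of that chain (the equivalence (\ref{equivalence})). The mechanism you add for uniqueness and convergence of the coupled per-transmitter iteration --- Banach's theorem via a Dobrushin coefficient for each $\mathbf{P}^i$ combined with Lipschitz dependence on $\bm{\alpha}_{-i}$ --- has a gap that is more serious than the boundary degeneracy you flag. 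Write the estimate out: two rows of $\mathbf{P}^i$ indexed by distinct non-saturated levels overlap only in the reset column, so the one-step Dobrushin coefficient is essentially $1-Q_i\min_{u}p^i_{tr}(u)$, while perturbing the other marginals moves the entries $p^i_{u,v_i}=\sum_{j\neq i}\alpha_j$ and $p^i_{u,w_i}$ by up to $\sum_{j\neq i}Q_j\|\Pi^j-\tilde{\Pi}^j\|_1$ each. The resulting Lipschitz matrix of your map $F$ has diagonal entries near $1$ and off-diagonal entries of order $Q_j$; its spectral radius is below one only when $Q_i\min_u p^i_{tr}(u)$ dominates $2\sum_{j\neq i}Q_j$, which fails for any nontrivial network (and $\min_u p^i_{tr}(u)$, typically attained at $u=0$, can be arbitrarily small). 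No choice of weighted max-norm repairs this, so the contraction --- which is where you place all of the uniqueness and convergence burden --- is not available.

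There is also an unclosed loop between your two halves: a fixed point of $F$ satisfies the product-form self-consistency condition $\Pi^i=\Pi^i\mathbf{P}^i(\bm{\alpha}^*_{-i})$, and you never show that this coincides with the marginals of the stationary distribution of the joint chain, so the super chain does not actually ``anchor'' your fixed point; establishing precisely that correspondence is the content of the paper's equivalence (\ref{equivalence}), proved by marginalizing $\mathbf{P}$ over the slices $\mathbf{\Sigma}_{u_i=u}$. (A smaller point: the joint chain is not irreducible on the full product space --- many joint states are mutually unreachable after resets --- only a unichain, which is what the paper claims and is all that is needed.) The workable repair is to drop Brouwer and the contraction entirely and argue as the paper does: show that the algorithm's sweep is the exact marginalization of one step of the homogeneous super chain, and let finite-state Markov chain convergence deliver the limit.
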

The proof is given in Appendix D.
\begin{Remark}
The steady-state distribution for all transmitters can be obtained by the iterative computation $\mathbf{\Pi}_{t+1}=\mathbf{\Pi}_t\mathbf{P}$ over the ``super'' Markov system as well, which is constructed in Appendix D. However, this is not as efficient as Algorithm \ref{markoviter}. From the computational complexity point of view, suppose that each transmitter has $m$ energy levels, and there are $n$ transmitters in total. The number of the states in the ``super'' Markov chain is $m^n$. If there is only one processer, the floating-point calculation for one iteration of the state distribution for the ``super'' Markov chain is approximately on the order of $O\left(2m^{2n}\right)$. On the contrary, by using Algorithm \ref{markoviter}, (\ref{puv1}) requires $n^2m^2$ calculations, and updating $\{\mathbf{P}^i_{t}\}$ requires about $nm$ calculations according to (\ref{puv2}). In addition, $\{\Pi^i_{t}\mathbf{P}^i_{t}\}$ requires $2nm^2$ calculations. Overall, one iteration for all transmitters is approximately on the order of $O\left(n^2m^2\right)$, which is more efficient than the case for the ``super'' Markov chain especially when $m$ and $n$ are large. Moreover, our algorithm can also be operated in a parallel way, i.e., computing $\Pi^i_{t+1}=\Pi^i_{t}\mathbf{P}^i_{t}$ for $1\leq i\leq n$ at the same time over different cores.
\end{Remark}

\subsection{Battery with i.i.d. EH Model}
The argument that the battery state evolves as a Markov process for the random case is analogous to that of the constant case in the previous subsection. The main difference is that the probability $p_{tr}^i(u_i)$ defined by (\ref{ptrans}) is changed, which needs to be further developed under the i.i.d. EH rate model.

We now consider the calculation of $p_{tr}^i(u_i)$. When transmitter $i$ grabs the channel with energy level $u_i$, according to the stopping rule $M^*$ (\ref{optiruleEPrand}) and $N^*$ (\ref{cprulerand}), the transmitter checks the condition $\max\left\{(R(0)-\lambda)L,-\lambda l+\mathbb{E}[U_{1}(\mathcal{F}_{1})\mid\mathcal{F}_{0}]\right\}\geq0$. If it is true, the transmitter starts EP until the $M^*$-th slot and transmits when $(R(M^*)-\lambda^*)L\geq-\lambda^*M^*l$ according to (\ref{cprulerand}). Specifically, given $U_0(u_i,|h^i|^2)\geq0$, the transmitter continues EP at slot $k$ for $0\leq k\leq M^*-1$, which is equivalent to $\max\{(R(k)-\lambda^*)L,-\lambda^*kl\}<\mathbb{E}[U_{k+1}(\mathcal{F}_{k+1})\mid\mathcal{F}_{k}]$, where $\mathcal{F}_{k}=\{u_i+l\sum_{j=0}^kE_j^i,|h^i|^2\}$. Then, at slot $M^*=m\le L/l$, the transmitter stops EP and transmits when $(R(m)-\lambda^*)L\geq\max\{-\lambda^*ml,\mathbb{E}[U_{m+1}(\mathcal{F}_{m+1})\mid\mathcal{F}_{m}]\}$. Thus, we obtain
\begin{align}\label{probtranrand}
    p_{tr}^i(u_i)&=\int_{0}^{\infty}\mathbb{P}\left\{\hbox{Transmits at $M^*$}\mid U_0(u_i,d|h^i|^2)\geq0\right\}\cdot\nonumber\\
&~~~~~~~~~\mathbb{P}\left\{U_0(u_i,d|h^i|^2)\geq0\right\}f(|h^i|^2)d|h^i|^2,
\end{align}
where $f(|h^i|^2)$ is the probability density function (PDF) of the channel power gain. The probability $\mathbb{P}\left\{U_0(u_i,d|h^i|^2)\geq0\right\}$ can be computed based on Proposition \ref{propEPrand}. For notation simplicity, we omit the condition $U_0(u_i,d|h^i|^2)\geq0$, and the first term in the integral of (\ref{probtranrand}) can be expanded as
\begin{align}
\mathbb{P}\left\{\hbox{Transmits at $M^*$}\right\}=\sum_{m=0}^{L/l}\left(\prod_{k=0}^{m-1}\mathbb{P}\left\{\alpha_k<0\right\}\right)\mathbb{P}\left\{\beta_m\leq0\right\}\label{tempprobtran}
\end{align}
where $\alpha_k=\max\{(R(k)-\lambda^*)L,-\lambda kl\}-\mathbb{E}[U_{k+1}(\mathcal{F}_{k+1})\mid \mathcal{F}_{k}]$, and $\beta_m=\max\{-\lambda ml,\mathbb{E}[U_{m+1}(\mathcal{F}_{m+1})\mid \mathcal{F}_{m}]\}-(R(m)-\lambda^*)L$. Note that in $\mathbb{P}\left\{\alpha_k<0\right\}$, $R(k)$ and $\mathbb{E}[U_{k+1}(\mathcal{F}_{k+1})\mid \mathcal{F}_{k}]$ are random since they are the functions of $\sum_{j=0}^kE^i_j$, where $\left\{E_j^i\right\}_{1\leq j\leq k}$ are i.i.d. with a known distribution and $E_0^i=0$. Thus, $\mathbb{P}\left\{\alpha_k<0\right\}$ can be computed. Using the similar argument, it is easy to see that $\mathbb{P}\left\{\beta_m\leq0\right\}$ can be computed as well. Therefore, the probability given in (\ref{tempprobtran}) is computable. Overall, we could obtain $p_{tr}^i(u_i)$ after plugging (\ref{tempprobtran}) into (\ref{probtranrand}).

After obtaining $p_{tr}^i(u_i)$, the transition probability $\{p^i_{u_i,\widetilde{u}_i}\}$, where $u_i\in\mathbf{\Delta}$, and $\widetilde{u}_i\in\{0,u_i,u_i+\delta,\cdots,B_{max}\delta\}$, can be calculated similarly as the case of constant EH rate. In addition, Algorithm \ref{markoviter} and Proposition \ref{algorithmmarkov} could be modified, such that they could suit the i.i.d. EH model, which is omitted in this paper.

\section{Computation of the Optimal Throughput}\label{onedsearch}
The optimal throughput $\lambda^*$ hinges upon the optimal stopping rules in (\ref{cpruleconst}) and (\ref{cprulerand}). Thus, to fully obtain the optimal scheduling policy of the proposed DOS, we next turn our attention to computing the value of $\lambda^*$.

By Propositions \ref{stopingruleN} and \ref{propCPrand}, $\lambda^*$ can be obtained by solving (\ref{lambdaequi}) or (\ref{lambdaequirand}) under the constant or i.i.d. EH model, respectively. Next, we briefly introduce the idea why there exists $\lambda^*$ such that the equation (\ref{lambdaequi}) or (\ref{lambdaequirand}) holds, and how to search $\lambda^*$. For brevity, we focus the constant EH rate case.

Note that $R(V^*)$ is a function of random variables $h^i$ and $B_{0}^i$; we could calculate the expectation on the left-hand side of (\ref{lambdaequi}) for each given $\lambda\geq0$. Such expectation requires the distribution of $B_{0}^i$, i.e., the steady-state distribution $\Pi^i$, which could be approximately computed as shown in Section \ref{markov}. In addition, for a given $\lambda$, an upper bound of this expectation can be obtained by fixing $\Pi^i=[0,~\cdots 0,~1]$. As $\lambda$ increases from zero to infinity, this upper bound decreases to zero at some $\widetilde{\lambda}<\infty$. Since the right-hand side of (\ref{lambdaequi}) is strictly increasing over $\lambda$ within the range $[0,+\infty)$, there at least exists one $\lambda^*$ satisfying (\ref{lambdaequi}). Therefore, an exhaustive one-dimension search can be applied to obtain the optimal throughput over the range $\left[0,\widetilde{\lambda}\right]$. Note that during each iteration of the exhaustive search, Algorithm \ref{markoviter} (given in Section \ref{markov}) is used to obtain the steady-state distribution for a given $\lambda\in\left[0,\widetilde{\lambda}\right]$, and then we check if the equation (\ref{lambdaequi}) or (\ref{lambdaequirand}) holds. Finally, $\lambda^*$ should be the largest one in $\left[0,\widetilde{\lambda}\right]$ that makes the equation (\ref{lambdaequi}) or (\ref{lambdaequirand}) hold.

In summary, the above search can characterize the optimal stopping rules given in Propositions \ref{stopingruleN} and \ref{propCPrand}, which completes the proposed DOS framework.

\section{Numerical results}\label{numerical}
In this section, we first validate Propositions \ref{stopingruleN} and \ref{propCPrand} to show that the optimal throughput $\lambda^*$ exists and can be found via one-dimension search. Second, we investigate the throughput gain of our proposed DOS with two-level probing over the best-effort delivery method, where the data is transmitted whenever the channel contention is successful. Note that such a method can be realized in the proposed DOS framework by fixing $M=0$ and setting $N=1$ in (\ref{cpruleconst}) and (\ref{cprulerand}). Let $\lambda_0$ denote the throughput obtained by the best-effort scheme, which can be calculated as
\begin{equation}\label{lambdazero}
    \lambda_0=\frac{\sum_{i=1}^I\frac{Q_i}{Q}\mathbb{E}\left[L\log\left(1+|h_n^i|^2\frac{B_{n,0}^i}{L\sigma^2}\right)\right]}{\frac{l}{Q}+L}.
\end{equation}

In general, a typical button cell battery has the capacity of 150~mAh with the end-point voltage of 0.9~V, which is equal to 150~mAh $\times$ 3600~s/h  $\times$ 0.9~V = 486~J. A thin-film rechargeable battery can offer 50~$\mu$Ah with 3.3~V, which is equal to 0.594~J. Since a typical transmission time interval is on the time scale of milliseconds, we let the energy unit be $\delta=10^{-3}$~J in the simulation. Accordingly, we set the capacity of the battery $B_{max}\delta=10^5\delta$, which falls between the capacity volume of a thin-film battery and that of a button cell battery. Also, the current commercial solar panel can provide power from 1~W to about 400~W, which is equivalent to $1\delta\cdot$ms$^{-1}$ $\sim$ $400\delta\cdot$ms$^{-1}$. According to this fact, in our simulation, we let the EH rate vary within the range $[0,40\delta]$. In addition, the channel gains are i.i.d for different links and the channel power gains follow an exponential distribution with mean 5. The variance of the noise is set to be 10~mW. The length of one time slot is unified as $l=1$~ms and the length of a transmission block is $L=100l$.

\begin{figure}
  \centering
  \includegraphics[width=3.2in]{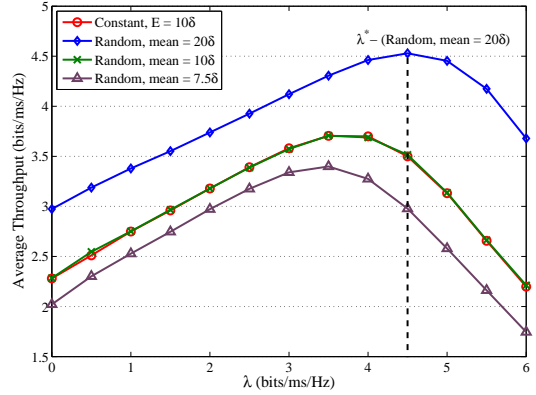}
  \caption{$\lambda$ v.s. the average throughput.}
  \label{lamdvsthrough}
\end{figure}
\subsubsection{Validation of Propositions \ref{stopingruleN} and \ref{propCPrand}}
In Fig.~\ref{lamdvsthrough}, we illustrate the variation of the average throughput as the ``threshold'' $\lambda$ changes. Without loss of generality, we first consider a homogeneous network with $10$ user pairs, i.e., all pairs are identical. For the constant EH model, the EH rate is set to be $E=10\delta$ for all transmitters. For the i.i.d. EH case, we choose the Bernoulli model \cite{AS,MK}: The EH rate is either zero or of a finite value with probability 0.5. In our simulation, we consider three cases for the mean values in i.i.d. EH model: $7.5\delta$, $10\delta$, and $20\delta$.

First, we observe in Fig.~\ref{lamdvsthrough} that as $\lambda$ increases from zero, the average throughput is increasing then decreasing. Then, the optimal point is achieved at $\lambda^*$, where the average throughput is at its apex that is also approximately of the same value as $\lambda^*$. Taking the case of i.i.d. EH model with mean $20\delta$ as an example in Fig.~\ref{lamdvsthrough}, the value of the optimal throughput $\lambda=\lambda^*$ is approximately 4.5, and the actual optimal average throughput is about 4.5 as well. Therefore, this observation validates our Propositions \ref{stopingruleN}, \ref{propCPrand} and discussions in Section \ref{onedsearch}. Second, we observe that the average throughput is almost the same when the mean of the EH rate in the i.i.d. EH model is equal to the EH rate in the constant EH model. Thus, the type of EH rate models does not directly determine the average throughput performance.

\subsubsection{Throughput gain}

We use $\lambda_{EP}$ to denote the throughput where only EP is adopted, i.e., setting $N=1$ and $M=M^*$, and $\lambda_{CP}$ to denote the throughput where only CP is adopted, i.e., setting $N=N^*$ and $M=0$. Thus, the throughput gains are defined as:
\begin{equation}
\left\{
  \begin{array}{ll}
    G_{EP}=\frac{\lambda_{EP}-\lambda_0}{\lambda_0}, & \hbox{gain from EP;} \\
    G_{CP}=\frac{\lambda_{CP}-\lambda_0}{\lambda_0}, & \hbox{gain from CP;} \\
    G_{DOS}=\frac{\lambda^*-\lambda_0}{\lambda_0}, & \hbox{gain from CP + EP.}
  \end{array}
\right.
\end{equation}

\begin{figure}
  \centering
  \includegraphics[width=3.2in]{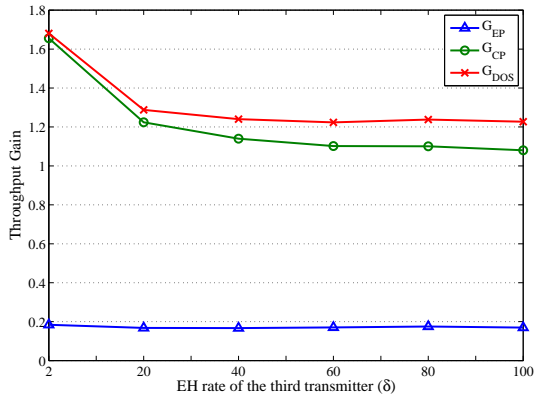}
  \caption{The throughput gain v.s. EH rate of the third transmitter.}
  \label{thgain}
\end{figure}
In Fig.~\ref{thgain}, we evaluate the above throughput gains for the network with $I=3$ user pairs. Recall from Section \ref{systemmodel} that our analysis is applicable for $I\geq 2$. Since the constant and i.i.d. EH rate models could attain the same throughput performance over $\lambda$, we only consider the constant EH model in this case. Particularly, we study a heterogeneous case where the first two transmitters have the same EH rates $2\delta$, while the EH rate of the third transmitter varies from $2\delta$ to $100\delta$.

We observe in Fig.~\ref{thgain} that as the EH rate of the third transmitter increases, $G_{EP}$ almost keeps constant and can achieve a gain about 19\%. It implies that after the channel contention, the successful transmitter with any EH rate could do EP to enhance its average transmission rate over the transmission block. Thus, the ESI of the successful transmitter does not have obvious impact on the throughput. However, we notice that $G_{CP}$ achieves its maximum when all transmitters are identical (with the same EH rate $2\delta$) and then decreases slowly as the EH rate of the third transmitter increases. The intuition is that when the difference among EH rates becomes larger, the stopping rule of CP will more likely let the transmitter with relatively low energy level to give up the channel, which results in a longer time on CP and then the throughput gain is lower than the case when all transmitters are identical. Regarding $G_{DOS}$, our proposed DOS with two-stage probing can achieve the highest throughput gain among three schemes. It is worth noticing that as the EH rate of the third transmitter increases, the efficiency of DOS becomes more apparent, although slowly, than the scheme with pure CP, which implies that the second stage probing brings more benefits. Our intuition is that a larger difference among the EH rates leads to a bigger difference of energy levels. Since EP allows the successful transmitter with relatively lower energy level to possibly harvest more energy after CP, EP will plays a more important role as the difference among the EH rates increases.

\begin{figure}
  \centering
  \includegraphics[width=3.2in]{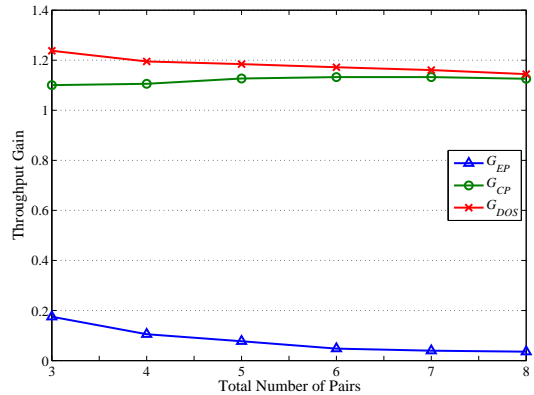}
  \caption{The throughput gain v.s. the size of the network.}
  \label{thgainoversize}
\end{figure}
In Fig.~\ref{thgainoversize}, we illustrate how the size of the network influences the throughput gains. In this scenario, we start from a three-pair network with EH rates $2\delta$, $2\delta$, and $80\delta$, respectively. Then, we keep adding pairs with EH rate $2\delta$ at the transmitter side. We observe that the throughput gain $G_{CP}$ is increasing a little as the size of the network is increasing. It is reasonable since CP could utilize the multi-user diversity of both channel gains and energy levels. We see that $G_{CP}$ increases slowly, since we only add a low-EH-rate transmitter at each time. We also observe that $G_{EP}$ is decreasing. The reason is that the more transmitters in the network, the less probability to transmit for each transmitter, and then more transmitters would maintain a high energy level. Thus, EP is rarely triggered after a channel contention. For the same reason, $G_{DOS}$ would approach $G_{CP}$ as the size of the network increases.

\section{Conclusion}\label{fin}
In this paper, we proposed a DOS framework for a heterogeneous single-hop \emph{ad hoc} network, in which each transmitter is powered by a renewable energy source and accesses the channel randomly. Our DOS framework includes two successive processes: All transmitters first probe the channel via random access, and then the successful transmitter decides whether to give up the channel or to optimally probe the energy before data transmission. The optimal scheduling policy of the DOS framework is obtained as follows: First, assuming the battery state is stationary at each transmitter, the expected throughput maximization problem was formulated as a rate-of-return optimal stopping problem, which was solved for both the constant and i.i.d. EH rate models; second, by fixing the stopping rule, the stored energy level at each transmitter was shown to own a steady-state distribution as time goes to infinity, where we also proposed an efficient iterative algorithm for its computation; finally, the optimal throughput and the scheduling policy is obtained via one-dimension search with the above two steps (i.e., finding the form of the optimal stopping rule and calculating the steady-state distribution) repeated in each iteration. Numerical results were also provided to validate our analysis; the proposed DOS with two-level probing was shown to outperform the best-effort delivery method.


\section*{Appendices}
\subsection{Proof of Proposition \ref{lemma1}}
For the first part of Proposition \ref{lemma1}, it follows by Theorem 1 in Chapter 3 of \cite{optstop} that $N^*(\lambda)$ exists and $S^*(\lambda)$ is attained by this $N^*(\lambda)$ if the following two conditions are satisfied:
\begin{description}
  \item[(C1)] $\limsup_{N\rightarrow\infty}r_N(\lambda)$ $\leq r_{-\infty}(\lambda)$ a.s.;
  \item[(C2)] $\mathbb{E}\left[\sup_{N\geq1} r_N(\lambda)\right]<\infty$,
\end{description}
where $r_N(\lambda)$ is given by (\ref{rnlambda}). As we pointed out in Section \ref{systemmodel}, the energy level $B_{N,0}$ is stationary for $N\geq1$. Although $\{R_N(M_{N}^*)\}_{N\geq1}$ are independent, it may not be identically distributed with respect to $h_N$ and $B_{N,0}$. However, it is not too difficult to show that (C1) and (C2) hold. The idea is that we first consider that every transmitter has the same statistics; then we apply the channel contention probability as the summation coefficients over all transmitters.

For (C1), if we assume that all transmitters have the same statistics as transmitter $i$, then $\{R_N^i(M_{N}^*)\}_{N\geq1}$ become i.i.d.. Since $\mathbb{E}\left[R_{N}^i(M_N^*)\right]<\infty$ according to Section \ref{formulation}, and the accumulated cost $\lambda T_N=\lambda l\left(K_N+\sum_{n=1}^{N-1}(K_n+M_n^{*})\right)\rightarrow\infty$ as $N\rightarrow\infty$ a.s., we obtain that $\mathbb{P}\left\{\limsup_{N\rightarrow\infty}r_N^i(\lambda)=-\infty\right\}=1$. Recall from Section \ref{systemmodel} that the channel is occupied by transmitter $i$ with probability $Q_i$ and $\sum_{i=1}^I\frac{Q_i}{Q}=1$, we obtain that
\begin{align}
1&=\sum_{i=1}^I\frac{Q_i}{Q}\mathbb{P}\left\{\limsup_{N\rightarrow\infty}r_N^i(\lambda)=-\infty\right\}\nonumber\\
&=\mathbb{P}\left\{\limsup_{N\rightarrow\infty}r_N(\lambda)=-\infty\right\},\nonumber
\end{align}
which proves that (C1) holds.

For (C2), it can be shown that
\begin{align}\label{lastterm}
\mathbb{E}&\left[\sup_{N\geq1} r_N^i(\lambda)\right]=\mathbb{E}\left[\sup_{N\geq1} \left(\left(R_{N}^i(M_N^*)-\lambda\right)L-\lambda T_N\right)\right]\nonumber\\
&~~~~~~~~~~~~~\leq\mathbb{E}\left[\sup_{N\geq1}\left(R_{N}^i(M_N^*)-\lambda (lN+L)\right)\right],
\end{align}
due to the fact that $K_n\geq1$ and $M_n^{*}\geq0$ for $1\leq n\leq N$. Since $\mathbb{E}\left[\left(R_N^i(M_N^{*})\right)^2\right]<\infty$, it follows that the right-hand side of (\ref{lastterm}) is finite by Theorem 1 in Chapter 4 of \cite{optstop}. Similar to the technique in the proof of (C1), we have
\begin{align}
\mathbb{E}\left[\sup_{N\geq1}r_N(\lambda)\right]&=\sum_{i=1}^I\frac{Q_i}{Q}\mathbb{E}\left[\sup_{N\geq1}r_N^i(\lambda)\right]<\infty,\nonumber
\end{align}
which shows that (C2) also holds.


For the second part, we know that with the cost $\lambda lK_N$ at the $N$-th CP for any $N\geq1$, the successful transmitter could choose one of three actions: transmits immediately with reward $(R_N(0)-\lambda)L$; or gives up the channel immediately, and obtains the optimal expected net reward $S^*(\lambda)$ based on the property of time invariance described in Section \ref{subformulation}; or starts EP and obtains the expected net reward $\mathbb{E}\left[U_1(\mathcal{F}_{N,1})\mid\mathcal{F}_{N,0}\right]$. Thus, by the optimal stopping theory \cite{optstop,somepro}, $S^*(\lambda)$ satisfies the \emph{optimality equation} under (C2) as
\begin{align}
&S^*(\lambda)=-\lambda lK_N+\nonumber\\
&\max\left\{S^*(\lambda),(R_N(0)-\lambda)L,\mathbb{E}\left[U_1(\mathcal{F}_{N,1})\mid\mathcal{F}_{N,0}\right]\right\},\nonumber
\end{align}
which is equivalent to (\ref{optiequi}).

\subsection{Proof of Proposition \ref{lemma2}}
For 1), we show the concavity of function $y(x)$ by checking its second-order derivative over $[0,1)$, which is given by
\begin{equation}
    y''(x)=-\frac{(a+b)^2}{(1-x)\left[a+1+(b-1)x\right]^2}\leq0. \nonumber
\end{equation}
Therefore, $y(x)$ is concave over $[0,1)$ \cite{boyd}. To prove the second part of 1), we check the first-order derivative of $y(x)$, which is given by
\begin{equation}\label{firstorder}
    y'(x)=-\log\left(1+\frac{a+bx}{1-x}\right)+\frac{a+b}{1-x+a+bx}.
\end{equation}
It is easy to see that as $x\rightarrow1^-$, the first term of the right-hand side of (\ref{firstorder}) goes to negative infinity, while the second term is bounded. Hence, $y'(x)$ is strictly negative as $x\rightarrow1^-$. Therefore, part 1) is proved.

Next, we prove 2). By checking the second-order derivative of $g(x)$, we obtain
\begin{equation}
  g''(x)=-\frac{a^2}{(1-x)(a+1-x)^2}\leq0,\nonumber
\end{equation}
which implies that $g(x)$ is concave. For the second part of 2), we consider the first-order derivative of $g(x)$, which is given by
\begin{equation}\label{firstorderg}
    g'(x)=-\log\left(1+\frac{a}{1-x}\right)+\frac{a}{1-x+a}.
\end{equation}
Since $g''(x)\leq0$, it follows that
\begin{equation}
    \max_{0\leq x<1}g'(x)=g'(0)=-\log\left(1+a\right)+\frac{a}{1+a}.\nonumber
\end{equation}
Moreover, due to the fact that $\frac{\mathrm{d}}{\mathrm{d}a}\left(-\log\left(1+a\right)+\frac{a}{1+a}\right)=-\frac{a}{(1+a)^2}\leq0$ for arbitrary $a\geq0$, we obtain
\begin{equation}
    \max_{0\leq x<1}g'(x)=g'(0)\leq\left.\left(-\log\left(1+a\right)+\frac{a}{1+a}\right)\right|_{a=0}=0,\nonumber
\end{equation}
which proves the second part of 2).

\subsection{Proof of Proposition \ref{prop1}}
According to Part 1) of Proposition \ref{lemma2}, we obtain that $G(\rho)$ is concave over $\rho\in[0,1)$, which means that $G'(\rho)=\frac{\mathrm{d}G(\rho)}{\mathrm{d}\rho}$ is decreasing over $[0,1)$ and attains its maximum at $\rho=0$.
Then, finding the maximum of $G(\rho)$ boils down to two cases:
\begin{enumerate}
  \item $\left.G'(\rho)\right|_{\rho=0}<0$: It follows that $G(\rho)$ is decreasing over $[0,1)$, and $\rho^*=0$ is the optimum.
  \item $\left.G'(\rho)\right|_{\rho=0}\geq0$: The point $\rho_0$, satisfying $\left.G'(\rho)\right|_{\rho=\rho_0}=0$, lies on the right-hand side of $\rho=0$. By Part 1) of Proposition \ref{lemma2}, $G'(\rho)<0$ as $\rho\rightarrow1^-$, which implies that $\rho_0\in[0,1)$. Since the optimal point $\rho^*\leq\frac{B_{max}\delta-B_{0}}{LE}$ due to (\ref{optig}), it follows that $\rho^*=\min\left\{\rho_0,\frac{B_{max}\delta-B_{0}}{LE}\right\}$.
\end{enumerate}
Note that $\left.G'(\rho)\right|_{\rho=0}\geq0$ is equivalent to $\frac{C+D}{1+C}\geq\log(1+C)$, where $C=\frac{|h|^2B_{0}}{L\sigma^2}\geq0$, $D=\frac{|h|^2E}{\sigma^2}\geq0$, and $\left.G'(\rho)\right|_{\rho=\rho_0}=0$ is equivalent to
\begin{equation}\label{rho0}
   \log\left(1+\frac{C+D\rho_0}{1-\rho_0}\right)=\frac{C+D}{1-\rho_0+C+D\rho_0}.
\end{equation}

Next, we show that when $\frac{C+D}{1+C}\geq\log(1+C)$, (\ref{rho0}) has a unique solution.  For $\rho\in[0,1)$, the left-hand side of (\ref{rho0}) is increasing over $\rho$ from $\log\left(1+C\right)$ to $+\infty$. For its right-hand side, we have the following two cases:
\begin{enumerate}
  \item $D\geq1$: The right-hand side of (\ref{rho0}) decreases from $\frac{C+D}{1+C}$ to $1$. Since $\frac{C+D}{1+C}\geq\log(1+C)$, there exists a unique solution $\rho_0$ for (\ref{rho0});
  \item $0\leq D<1$: The right-hand side of (\ref{rho0}) increases from $\frac{C+D}{1+C}$ to $1$.
If the first-order derivative of the left-hand side of (\ref{rho0}) is always greater than that of the right-hand side, there must be only one solution for (\ref{rho0}) when $\frac{C+D}{1+C}\geq\log(1+C)$. Thus, we check their first-order derivatives: For the left-hand side of (\ref{rho0}), we obtain
\begin{align}\label{left}
&\frac{\mathrm{d}}{\mathrm{d}\rho}\log\left(1+\frac{C+D\rho}{1-\rho}\right)=\frac{C+D}{(1-\rho)\left(1+C+(D-1)\rho\right)};
\end{align}
for the right-hand side, we have
\begin{align}\label{right}
 \frac{\mathrm{d}}{\mathrm{d}\rho}\left(\frac{C+D}{1-\rho+C+D\rho}\right)=\frac{(C+D)(1-D)}{\left(1+C+(D-1)\rho\right)^2}.
\end{align}
Thus, by calculating the difference between (\ref{left}) and (\ref{right}), we arrive at
\begin{align}
&\frac{C+D}{(1-\rho)\left(1+C+(D-1)\rho\right)}-\frac{(C+D)(1-D)}{\left(1+C+(D-1)\rho\right)^2}\nonumber\\
=&\frac{(C+D)^2}{(1-\rho)\left(1+C+(D-1)\rho\right)^2}\geq0.
\end{align}
Therefore, there exists a unique solution $\rho_0$ satisfying (\ref{rho0}).
\end{enumerate}
In conclusion, the proposition is proved.

\emph{Remark}: Since it is proved that $\rho_0$ is unique in (\ref{rho0}), $\rho_0$ can be found just by adopting a simple one-dimension searching method, e.g., bisection search.

\subsection{Proof of Proposition \ref{algorithmmarkov}}
To prove this proposition, we construct an axillary ``super'' Markov chain in which each state is a ``super'' vector of aggregated energy levels across the whole network, whose transition probability matrix does not change over time $t$. Afterwards, we prove that such a ``super'' Markov chain has a unique steady-state distribution. Then, we show that for any time $t$ in the original Markov chain, one iteration for updating $\Pi^i_{t}$ for $1\leq i\leq I$ in Algorithm \ref{markoviter}  is equivalent to the evolution of the state distribution in the ``super'' Markov chain, thereby proving the convergence of Algorithm \ref{markoviter}.

To construct such a ``super'' Markov chain, we need to jointly consider the states of energy levels across all transmitters. Let $\mathbf{\Sigma}$ denote the set of all possible battery states over the whole system, i.e.,
\begin{equation}\label{sigmaset}
 \mathbf{\Sigma}=\left\{\mathbf{u}=\left(u_1~\cdots~u_I\right): u_1\in\mathbf{\Delta}_1,\cdots, u_I\in\mathbf{\Delta}_I \right\}.
\end{equation}
Furthermore, we use $\mathbf{B}_t$ to denote the battery state of the system at time $t$, and thus we have $\mathbf{B}_t\in\mathbf{\Sigma}$. Note that the number of elements in $\mathbf{\Sigma}$ is $\left(\left\lceil\frac{B_{max}\delta}{E^1l}\right\rceil+1\right)\times\cdots\times\left(\left\lceil\frac{B_{max}\delta}{E^Il}\right\rceil+1\right)$.

Suppose that $\mathbf{B}_t=\mathbf{u}$. There are $I+1$ possible events at time $t$: A transmission is performed by transmitter $i$, where $1\leq i\leq I$, or no transmission happens.

If the $i$-th transmitter transmits, there is $ \mathbf{B}_{t+1}=\mathbf{v}_i$, where $\mathbf{v}_i\in\mathbf{\Sigma}$ and
\begin{equation}
\mathbf{v}_i=\left(
     \begin{array}{ll}
       \min\{u_1+E^1L, B_{max}\delta\} \\
       \cdots \\
       0 \\
       \cdots \\
       \min\{u_I+E^IL, B_{max}\delta\}
     \end{array}
   \right)^T,\nonumber
\end{equation}
in which the $i$-th element is zero. According to (\ref{pu0}), the corresponding transition probability is given by
\begin{equation}\label{puvi}
 p_{\mathbf{u},\mathbf{v}_i}=Q_ip_{tr}^i(u_i), ~1\leq i\leq I.
\end{equation}

If no transmission happens, all transmitters just harvest energy for one time slot. Then, we obtain $\mathbf{B}_{t+1}=\mathbf{w}$, where $\mathbf{w}\in\mathbf{\Sigma}$ and
\begin{equation}
\mathbf{w}=\left(
     \begin{array}{ll}
       \min\{u_1+E^1l, B_{max}\delta\} \\
       \cdots \\
       \min\{u_i+E^il, B_{max}\delta\} \\
       \cdots \\
       \min\{u_I+E^Il, B_{max}\delta\}
     \end{array}
   \right)^T.\nonumber
\end{equation}
The corresponding transition probability is just the complement of the transmission probability over all other possible $I$ cases, which is given by
\begin{equation}\label{puw}
 p_{\mathbf{u},\mathbf{w}}=1-\sum_{i=1}^IQ_ip_{tr}^i(u_i).
\end{equation}
Therefore, $\{\mathbf{B}_t\}_{t\geq0}$ is a unichain \cite{Gallager}, i.e., a finite-state Markov process that contains a single recurrent class. By calculating the transition probability for each $\mathbf{u}\in\mathbf{\Sigma}$, we obtain the transition probability matrix $\mathbf{P}$ for $\{\mathbf{B}_t\}_{t\geq0}$. Clearly, $\mathbf{P}$ is a stochastic matrix and is invariant over time. Therefore, there exists a unique probability vector $\mathbf{\Pi}$ such that $\mathbf{\Pi}=\mathbf{\Pi}\mathbf{P}$ holds \cite{Gallager}. In fact, $\mathbf{\Pi}$ is the steady-state distribution of $\{\mathbf{B}_t\}_{t\geq0}$.

So far, we have constructed a ``super'' Markov chain $\{\mathbf{B}_t\}_{t\geq0}$ for the whole system, for which the steady-state distribution exists and is unique. Therefore, by the iteration $\mathbf{\Pi}_{t+1}=\mathbf{\Pi}_t\mathbf{P}$, we have $\lim_{t\rightarrow\infty}\mathbf{\Pi}_t=\mathbf{\Pi}$. Thus, it suffices to show that
\begin{equation}\label{equivalence}
   \mathbf{\Pi}_{t+1}=\mathbf{\Pi}_t\mathbf{P}~\Leftrightarrow~\left\{
     \begin{array}{ll}
       \Pi^1_{t+1}=\Pi^1_{t}\mathbf{P}^1_{t}, \\
       \cdots \\
       \Pi^i_{t+1}=\Pi^i_{t}\mathbf{P}^i_{t}, \\
       \cdots \\
       \Pi^I_{t+1}=\Pi^I_{t}\mathbf{P}^I_{t}.
     \end{array}
   \right.~~~t\geq0,
\end{equation}
If (\ref{equivalence}) is true, the state distribution of each transmitter converges to the unique steady-state distribution.

Next, we are going to show that both the directions ``$\Rightarrow$'' and ``$\Leftarrow$'' of (\ref{equivalence}) hold. For notational simplicity, we omit the time index $t$. In fact, the direction ``$\Leftarrow$'' is the same as constructing the ``super'' Markov chain as discussed earlier. If the system is at state $\mathbf{u}=\left(b_1E^1l \cdots b_IE^Il\right)$, where $b_i\in\left\{0,1,2,\cdots, \left\lfloor\frac{B_{max}\delta}{E^il}\right\rfloor,B_{max}\right\}$, $1\leq i\leq I$, the probability $\mathbf{\Pi}(\mathbf{u})$ is the joint probability over all transmitters, i.e., $\mathbf{\Pi}(\mathbf{u})=\prod_{i=1}^I\pi^i_{b_i}$. The way of constructing transition probability matrix $\mathbf{P}$ is given by (\ref{puvi}) and (\ref{puw}), which can be obtained directly from (\ref{pu0}) for $\{\mathbf{P}^i\}$. Thus, both  $\mathbf{\Pi}$ and $\mathbf{P}$ can be obtained from the right-hand side of (\ref{equivalence}).

For the direction ``$\Rightarrow$'' of (\ref{equivalence}), we need to show how we obtain $\{\Pi^i\}$ and $\{\mathbf{P}^i\}$ from the left-hand side of (\ref{equivalence}). We consider $\{\Pi^i\}$ first. Given the state distribution $\mathbf{\Pi}$ of the system, there exists an one-to-one mapping from each element of $\mathbf{\Sigma}$ to that of $\mathbf{\Pi}$. Let $\mathbf{\Pi}(\mathbf{u})$ denote the probability of the system staying at state $\mathbf{u}\in\mathbf{\Sigma}$. Obviously, there is $\sum_{\mathbf{u}\in\mathbf{\Sigma}}\mathbf{\Pi}(\mathbf{u})=1$. Then, we consider the subset of $\mathbf{\Sigma}$ such that transmitter $i$ stays at state $u\in\mathbf{\Delta}_i$, i.e.,
\begin{align}\label{sigmasetfori}
\mathbf{\Sigma}_{u_i=u}=&\left\{\mathbf{u}=\left(u_1 \cdots u_i \cdots  u_I\right):\right.\nonumber\\
& \left.u_1\in\mathbf{\Delta}_1,\cdots,u_i=u,\cdots, u_I\in\mathbf{\Delta}_I \right\}.
\end{align}
Clearly, (\ref{sigmasetfori}) satisfies $\bigcup_{u\in\mathbf{\Delta}_i}\mathbf{\Sigma}_{u_i=u}=\mathbf{\Sigma}$. Then, the probability that transmitter $i$ stays at state $u=bE^il$, where $b\in\left\{0,1, 2,\cdots, \left\lfloor\frac{B_{max}\delta}{E^il}\right\rfloor,B_{max}\right\}$, is equal to the probability that the system is staying at $\mathbf{\Sigma}_{u_i=u}$, i.e.,
\begin{equation}\label{bdelta}
\pi_{b}^i=\mathbb{P}\left\{\mathbf{\Sigma}_{u_i=u}\right\}=\sum_{\mathbf{u}\in\mathbf{\Sigma}_{u_i=u}}\mathbf{\Pi}(\mathbf{u}).
\end{equation}
In this way, we can obtain the state distribution $\Pi^i$ for transmitter $i$ such that $\Pi^i=[\pi_{0}^i \cdots \pi_{b}^i\cdots\pi_{B_{max}}^i]$.

Next, we consider $\{\mathbf{P}^i\}$. When transmitter $i$ stays at the energy state $u\in\mathbf{\Delta}_i$, it can transfer to state $0$, $v_1$, or $v_2$ , where $v_1=\min\left\{u+E^iL,B_{max}\delta\right\}$, and $v_2=\min\left\{u+E^il,B_{max}\delta\right\}$. Accordingly, from $\mathbf{\Sigma}_{u_i=u}$, there are three possible cases:
\begin{enumerate}
  \item $\mathbf{\Sigma}_{u_i=u}\rightarrow\mathbf{\Sigma}_{u_i=0}$: For each state $\mathbf{u}\in\mathbf{\Sigma}_{u_i=u}$, there is only one possible route to $\mathbf{\Sigma}_{u_i=0}$ with probability $Q_ip_{tr}^i(u)$ such that transmitter $i$ transmits and goes into state $0$. In fact, such transition probability does not change for any $\mathbf{u}\in\mathbf{\Sigma}_{u_i=u}$. Thus, by taking all possible states into account, the transition probability can be computed by
\begin{align}\label{pu0sigma}
  p^i_{u,0}&=\mathbb{P}\left\{\mathbf{\Sigma}_{u_i=u}\rightarrow\mathbf{\Sigma}_{u_i=0}\mid\mathbf{\Sigma}_{u_i=u}\right\}\nonumber\\
&=\frac{Q_ip_{tr}^i(u)\mathbb{P}\left\{\mathbf{\Sigma}_{u_i=u}\right\}}{\mathbb{P}\left\{\mathbf{\Sigma}_{u_i=u}\right\}}=Q_ip_{tr}^i(u),
\end{align}
which is equal to (\ref{pu0}).
  \item $\mathbf{\Sigma}_{u_i=u}\rightarrow\mathbf{\Sigma}_{u_i=v_1}$: For each state $\mathbf{u}\in\mathbf{\Sigma}_{u_i=u}$, there are $I-1$ possible routes to $\mathbf{\Sigma}_{u_i=v_1}$. We pick the route caused by transmitter $j\neq i$, i.e., the $j$-th transmitter transmits. Suppose that at state $\mathbf{u}$, the transmitter $j$ is in the energy state $bE^jl\in\mathbf{\Delta}_j$. The probability of staying at $\mathbf{\Sigma}_{u_i=u,u_j=bE^jl}$ is given as $\pi_b^j\mathbb{P}\left\{\mathbf{\Sigma}_{u_i=u}\right\}$ by (\ref{bdelta}). Thus, the transition $\mathbf{\Sigma}_{u_i=u,u_j=bE^jl}\rightarrow\mathbf{\Sigma}_{u_i=v_1, u_j=0}$ describes the transition of transmitter $i$ from state $u$ to state $v_1$ caused by transmitter $j$ with energy level $u_j=bE^jl$. Similarly as in (\ref{pu0sigma}), the transition probability for this case is given by
\begin{align}
  &\mathbb{P}\left\{\mathbf{\Sigma}_{u_i=u,u_j=bE^jl}\rightarrow\mathbf{\Sigma}_{u_i=v_1,u_j=0}\mid\mathbf{\Sigma}_{u_i=u,u_j=bE^jl}\right\}\nonumber\\
=&\frac{Q_jp_{tr}^j(bE^jl)\mathbb{P}\left\{\mathbf{\Sigma}_{u_i=u,u_j=bE^jl}\right\}}{\mathbb{P}\left\{\mathbf{\Sigma}_{u_i=u,u_j=bE^jl}\right\}}\nonumber\\
=&Q_jp_{tr}^j(bE^jl).\nonumber
\end{align}
When we extend to other transmitters besides $i$, and consider all possible states for each transmitter, we obtain the probability of the one step transition $\mathbf{\Sigma}_{u_i=u}\rightarrow\mathbf{\Sigma}_{u_i=v_1}$ as
\begin{align}
  &\mathbb{P}\left\{\mathbf{\Sigma}_{u_i=u}\rightarrow\mathbf{\Sigma}_{u_i=v_1}\mid\mathbf{\Sigma}_{u_i=u}\right\}~~~~~~~~~~~~~~~~~~~~\nonumber\\
  =&\frac{\mathbb{P}\left\{\mathbf{\Sigma}_{u_i=u}\rightarrow\mathbf{\Sigma}_{u_i=v_1},~\mathbf{\Sigma}_{u_i=u}\right\}}{\mathbb{P}\left\{\mathbf{\Sigma}_{u_i=u}\right\}}\nonumber
\end{align}
\begin{align}
=&\frac{1}{\mathbb{P}\left\{\mathbf{\Sigma}_{u_i=u}\right\}}\sum_{j\neq i}\sum_{b=0}^{B_{max}}\left(\mathbb{P}\left\{\mathbf{\Sigma}_{u_i=u,u_j=bE^jl}\right\}\right.\nonumber\\
&~~~~~~~\cdot\left.\mathbb{P}\left\{\mathbf{\Sigma}_{u_i=u}\rightarrow\mathbf{\Sigma}_{u_i=v_1}\mid\mathbf{\Sigma}_{u_i=u,u_j=bE^jl}\right\}\right)\nonumber
\end{align}
\begin{align}
=&\frac{1}{\mathbb{P}\left\{\mathbf{\Sigma}_{u_i=u}\right\}}\sum_{j\neq i}\sum_{b=0}^{B_{max}}\left(\mathbb{P}\left\{\mathbf{\Sigma}_{u_i=u,u_j=bE^jl}\right\}\right.\nonumber\\
\cdot&\left.\mathbb{P}\left\{\mathbf{\Sigma}_{u_i=u,u_j=bE^jl}\rightarrow\mathbf{\Sigma}_{u_i=v_1,u_j=0}\mid\mathbf{\Sigma}_{u_i=u,u_j=bE^jl}\right\}\right)\nonumber\\
=&\frac{1}{\mathbb{P}\left\{\mathbf{\Sigma}_{u_i=u}\right\}}\sum_{j\neq i}\sum_{b=0}^{B_{max}}\pi_b^j\mathbb{P}\left\{\mathbf{\Sigma}_{u_i=u}\right\}Q_jp_{tr}^j(bE^jl)\nonumber\\
=&\sum_{j\neq i}\sum_{b=0}^{B_{max}}\pi_b^j Q_jp_{tr}^j(bE^jl).\label{puv1equi}
\end{align}
Thus, (\ref{puv1equi}) is equivalent to (\ref{puv1}).
  \item $\mathbf{\Sigma}_{u_i=u}\rightarrow\mathbf{\Sigma}_{u_i=v_2}$: The transition probability for this case can be obtained by taking the complement of (\ref{pu0sigma}) and (\ref{puv1equi}), which is equivalent to (\ref{puv2}).
\end{enumerate}
Therefore, we obtain all possible transitions for transmitter $i$ at time $t$, for which the corresponding transition probabilities can be computed as well. Thus, $\{\Pi^i\}$ and $\{\mathbf{P}^i\}$ are obtained from $\mathbf{\Pi}$ and $\mathbf{P}$, which proves the direction ``$\Rightarrow$'' of (\ref{equivalence}).

Overall, the convergence of Algorithm \ref{markoviter} is proved.

\end{document}